\documentclass[twoside]{article}

\usepackage{qic}
\usepackage{amsmath,amssymb,amsfonts,mathtools}
\usepackage{amsthm}
\usepackage{mathrsfs}
\usepackage{enumitem}
\usepackage[hidelinks]{hyperref}
\usepackage{tikz}
\usetikzlibrary{arrows.meta,calc,decorations.pathreplacing}

\theoremstyle{plain}
\newtheorem{theorem}{Theorem}[section]
\newtheorem{lemma}[theorem]{Lemma}
\newtheorem{proposition}[theorem]{Proposition}
\newtheorem{corollary}[theorem]{Corollary}

\theoremstyle{definition}
\newtheorem{definition}[theorem]{Definition}

\newtheorem{conjecture}[theorem]{Conjecture}

\theoremstyle{remark}
\newtheorem{remark}[theorem]{Remark}

\newcommand{\F}{F_2}
\newcommand{\Z}{\mathbb{Z}}

\newcommand{\R}{\mathbb{R}}

\newcommand{\tr}{\mathrm{tr}}
\newcommand{\ab}{\mathrm{ab}}
\newcommand{\laurent}{\Z[T^{\pm1},S^{\pm1}]}

\newcommand{\angles}[1]{\left\langle #1\right\rangle}
\newcommand{\Jset}{\mathcal J}
\newcommand{\runningtitle}{Exact Separation via Trace Geometry}
\newcommand{\runningauthors}{Z. Chen and J. Wu}

\begin{document}
\setlength{\textheight}{8.0truein}
\runninghead{\runningtitle}{\runningauthors}

\normalsize\textlineskip
\thispagestyle{empty}
\setcounter{page}{1}
\copyrightheading{0}{0}{2026}{000--000}

\vspace*{0.88truein}

\fpage{1}

\centerline{\bf EXACT SEPARATION OF WORDS}
\vspace*{0.035truein}
\centerline{\bf VIA TRACE GEOMETRY}
\vspace*{0.37truein}
\centerline{\footnotesize ZEYU CHEN\footnote{Email: chenzeyu@zju.edu.cn} AND JUNDE WU\footnote{Email: wjd@zju.edu.cn}}
\vspace*{0.015truein}
\centerline{\footnotesize\it School of Mathematical Sciences, Zhejiang University}
\baselineskip=10pt
\centerline{\footnotesize\it Hangzhou 310058, People's Republic of China}
\vspace*{0.225truein}
\publisher{ }{ }

\vspace*{0.21truein}

\abstracts{A basic question in the study of measure-once quantum finite automata is whether two distinct input words can be separated with certainty. The exact separation problem reduces to a trace-vanishing question in \(SU(2)\). The main difficulty lies in the genuinely nonabelian regime, where \(u\) and \(v\) have the same abelianization. This paper develops a slice-driven framework that converts algebraic invariants of the word---prefix statistics, metabelian polynomials, and slope specializations---into explicit low-dimensional families in \(SU(2)^2\) on which the trace-vanishing question can be analyzed effectively. A quadratic trace-deficit identity on a principal one-parameter family provides the main algebraic-to-geometric bridge. Building on this framework, the paper establishes three core certified slice criteria: a dihedral criterion, equivalently readable through a signed \(a\)-count; a quaternionic criterion; and a local one-row criterion. Together with a supplementary interior-point test and a binary-dihedral slice, these results sharply reduce the unresolved portion of the problem to a residual super-degenerate class, while also clarifying the limitations of certification strategies based only on finitely many finite-subgroup evaluations.}{}{}

\vspace*{10pt}

\keywords{quantum finite automata, word separation, representation theory, word maps, trace geometry}
\vspace*{3pt}

\vspace*{1pt}\textlineskip
\section{Introduction}
Quantum computing has advanced rapidly in recent years, with experimental platforms, algorithmic ideas, and complexity-theoretic viewpoints developing side by side~\cite{Shor1997,Preskill2018,Arute2019,Kim2023}. Within that broad development, quantum finite automata (QFAs) remain one of the most revealing models. Their state space is tiny, their dynamics are transparent, and yet the basic quantum ingredients of interference, reversibility, and measurement already appear in full force. Since the foundational works of Moore and Crutchfield on measure-once quantum finite automata (MO-QFAs) and of Kondacs and Watrous on measure-many and two-way variants, QFAs have served both as a compact testing ground for quantum advantage and as a precise way to understand what finite quantum memory can and cannot do~\cite{MooreCrutchfield00,KondacsWatrous97,AmbainisFreivalds98,BrodskyPippenger02,SayYak14}. For exact and zero-error questions in particular, the model is especially attractive: it is simple enough to allow explicit matrix calculations, but rigid enough that every successful construction reflects a genuinely structural phenomenon.

Among these questions, the \emph{exact separation problem} stands out for its conceptual simplicity: given two distinct words \(u\) and \(v\) over a binary alphabet, can a quantum automaton accept one and reject the other with certainty? In the broader separation program of Belovs, Montoya, and Yakary{\i}lmaz~\cite{Choffrut17,BMY16}, two-state QFAs were shown to separate every word pair in the nondeterministic mode, while the universal two-state zero-error question was left as an explicit conjecture. The present paper addresses the geometric core of that remaining question.

In the two-state measure-once QFA (MO-QFA) model, exact separation reduces to a remarkably concrete problem~\cite{BMY16}: it suffices to produce matrices \(A,B\in SU(2)\) for which a certain trace vanishes. A question from quantum automata theory thus becomes a question about the \emph{trace geometry} of word maps on a compact Lie group. When the two words have different letter counts, i.e.\ \(\ab(u)\neq\ab(v)\), the answer is already visible at the abelian level. The genuine difficulty begins in the \emph{hard} regime, where both words are positive (containing only \(a\) and \(b\), no inverses) and share the same abelianization, so that
\[
 w:=u^{-1}v\in F_2'.
\]
This positive-word-difference formulation isolates the first genuinely nonabelian layer of the problem while preserving enough combinatorial rigidity to support explicit analysis~\cite{BMY16}. The work of~\cite{Choffrut17} is decisive in identifying this conjectural frontier, but it does not resolve the equal-abelianization case for two-state MO-QFAs, which is precisely the regime studied here.

Even within their narrow state space, exact QFAs can be strikingly economical: for suitable promise problems, they achieve constant size where classical automata require unbounded growth~\cite{AmbainisYakaryilmaz12,GruskaQiuZheng15}. At the same time, MO-QFAs are tightly constrained, and even their bounded-error power forms only a proper subclass of the regular languages~\cite{BrodskyPippenger02}. The exact separation question therefore sits at a delicate boundary, where many word families yield to algebraic invariants or special matrix constructions, but the most symmetric hard words continue to resist the available criteria.

On the group-theoretic side, recent work on global word maps provides important context. Khoi and Toan~\cite{KhoiToan22} gave a trace-polynomial criterion for surjectivity of certain word maps on \(SU(2)\) and used it to classify both surjective and non-surjective families. However, exact separation requires only that the image of the word map meet the trace-zero locus, not that it cover all of \(SU(2)\), so a surjectivity-based strategy is overly restrictive for exact separation. A more targeted geometric organizing principle is therefore needed.

This paper develops a \emph{slice-driven} approach, guided by a structural obstruction: no strategy that evaluates words only on a fixed finite catalog of finite subgroups can be universal for hard positive-word differences. A slice is a concrete low-dimensional family of pairs \((A,B)\in SU(2)^2\) on which the trace of \(w(A,B)\) can be read from explicit invariants of the word. A slice family is a collection of such slices, typically indexed by a discrete parameter such as a slope or a local exponent, and the parameter choice is allowed to depend on the word.

Rather than searching the full representation space \(SU(2)^2\), the framework restricts attention to explicit manifolds where the trace geometry is governed by metabelian data. On the algebraic side, the metabelian polynomial \(M_w(T,S)\) is decomposed into interval blocks indexed by prefix statistics of positive words, and suitable slope specializations \(S=T^r\) preserve nontrivial information. On the geometric side, a slope-visible slice converts that specialization into a quadratic trace-deficit identity near commuting basepoints, providing a direct algebraic-to-geometric bridge from Fox calculus to guaranteed trace drops toward the trace-zero locus.

Building on this functional engine, the paper assembles a certified exact-separation menu whose core consists of three slice criteria (Theorem~\ref{thm:certified-menu-items}):
\begin{itemize}[leftmargin=*,itemsep=2pt]
\item \textbf{The dihedral/signed-\(a\)-count criterion:} a global evaluation on a mixed normalizer slice that forces a trace-zero witness whenever an alternating row-prefix invariant, equivalently a signed \(a\)-count, is nonzero.
\item \textbf{The quaternionic criterion:} an algebraic specialization on the principal slope-visible family that captures nonvanishing metabelian data through the rigid symmetries of the quaternion group.
\item \textbf{The local one-row criterion:} a geometric construction that isolates a local commutator pattern when the discrepancy between the two words is confined to a single contiguous block move.
\end{itemize}
Two supplementary mechanisms further enlarge the range: a binary-dihedral symmetry test that reads a word exponent \(\kappa(w)\), and an interior-point filter that exploits explicit non-positive evaluations of the Fricke--Vogt trace polynomial in trace coordinates. In experiments on \(50{,}000\) randomly generated hard positive-word pairs, this integrated menu found a trace-zero witness for every pair.

Formalizing these criteria also clarifies their present boundary. The constructions isolate a sharply delimited residual class, termed \emph{super-degenerate}, which presently marks the geometric frontier of the two-state exact-separation conjecture. The resulting perspective reduces the conjecture from an unbounded algebraic search to a focused extension problem on this residual locus.

The paper is organized as follows. Section~\ref{sec:preliminaries} develops the metabelian framework and the slope-specialization tools used throughout. Section~\ref{sec:witnesses} constructs the main slice families and extracts the certified witness mechanisms. Section~\ref{sec:obstructions} synthesizes these criteria, isolates the super-degenerate residual class, and proves the finite-image obstruction. Section~\ref{sec:conclusion} concludes with a brief outlook.

\section{Algebraic Foundations and Metabelian Invariants}\label{sec:preliminaries}

This section establishes the algebraic and automata-theoretic foundations used throughout the paper. The material is organized in three parts: basic free-group conventions and the abelianization map; the Fox free differential calculus and the metabelian polynomial, including the explicit interval-block formula for positive-word differences; and the trace-zero witness formulation that connects these algebraic objects to the two-state QFA separation problem of~\cite{BMY16}.

\subsection{Basic notations}
Let \(\F=\langle a,b\rangle\) denote the free group on two generators~\cite{LyndonSchupp77,Roman12}. A \emph{positive word} in \(a\) and \(b\) means a finite sequence of letters from \(\{a,b\}\). The set of all such words, including the empty word \(1\), is denoted by \(\{a,b\}^{\ast}\). Via the natural embedding \(\{a,b\}^{\ast}\hookrightarrow \F\), every positive word is regarded as an element of \(\F\). For \(x\in \{a,b\}^{\ast}\), write \(\#_a(x)\) and \(\#_b(x)\) for the numbers of occurrences of \(a\) and \(b\) in \(x\). In particular,
\[
\#_a(1)=\#_b(1)=0.
\]

For elements \(x,y\) of a group, write
\[
[x,y]:=x^{-1}y^{-1}xy.
\]
For subgroups \(H,K\le G\), write
\[
[H,K]:=\angles{[h,k]:h\in H,\ k\in K}.
\]
In particular,
\[
\F'=[\F,\F],
\qquad
\F''=[\F',\F'].
\]
The derived series is defined by \(\F^{(1)}=\F'\) and \(\F^{(2)}=\F''\), and the lower central series by \(\gamma_1(\F)=\F\) and \(\gamma_{k+1}(\F)=[\gamma_k(\F),\F]\).

If \(X\) and \(Y\) are sets, then \(X\times Y\) denotes their Cartesian product. More generally, \(X^2:=X\times X\). The abelianization map
\[
\ab:\F\to \Z^2
\]
is the group homomorphism from \(\F\) to the additive group \(\Z^2\) determined by
\[
\ab(a)=(1,0),
\qquad
\ab(b)=(0,1),
\qquad
\ab(a^{-1})=(-1,0),
\qquad
\ab(b^{-1})=(0,-1).
\]
Equivalently, \(\ab(xy)=\ab(x)+\ab(y)\) for all \(x,y\in \F\). Thus, if positive words \(u\) and \(v\) satisfy \(\ab(u)=\ab(v)\), then a direct calculation gives
\[
\ab(u^{-1}v)=-\ab(u)+\ab(v)=0,
\]
so
\[
w:=u^{-1}v\in \ker(\ab)=\F'.
\]

If \(N\triangleleft G\) is a normal subgroup, then \(G/N\) denotes the quotient group of cosets \(gN\), with multiplication \((gN)(hN)=(gh)N\). In particular, \(\F/\F''\) is the metabelian quotient of \(\F\).

If \(R\) is a commutative ring, then \(R[x]\) denotes the polynomial ring in the indeterminate \(x\).

\subsection{Fox calculus and the metabelian polynomial}

Work in the integral group ring \(\Z[\F]\). The free differential calculus of Fox~\cite{Fox53} provides derivations
\[
\frac{\partial}{\partial a},\frac{\partial}{\partial b}:\Z[\F]\to \Z[\F]
\]
determined by
\[
\frac{\partial(uv)}{\partial b}=\frac{\partial u}{\partial b}+u\frac{\partial v}{\partial b},
\qquad
\frac{\partial a}{\partial b}=0,
\qquad
\frac{\partial b}{\partial b}=1,
\qquad
\frac{\partial(x^{-1})}{\partial b}=-x^{-1}\frac{\partial x}{\partial b}.
\]
Composing with the abelianization ring homomorphism
\[
\pi_{\mathrm{ab}}:\Z[\F]\to \laurent,
\]
determined by \(a\mapsto T\) and \(b\mapsto S\), produces Laurent polynomials that encode the metabelian information carried by the Fox derivative. For \(f\in \Z[\F]\), write \(f^{\mathrm{ab}}:=\pi_{\mathrm{ab}}(f)\). This avoids any clash with complex conjugation, which appears later only for scalar-valued functions on the unit circle.

\begin{proposition}[cf.~\cite{Fox53,LyndonSchupp77}]\label{prop:metabelian-polynomial}
Let \(\pi_{\mathrm{ab}}:\Z[\F]\to \laurent\) be the abelianization ring homomorphism defined above. For \(w\in\F\), define
\[
B_w(T,S):=\left(\frac{\partial w}{\partial b}\right)^{\mathrm{ab}}\in \laurent.
\]
If \(w\in\F'\), then \(B_w(1,S)=0\), so there exists a unique Laurent polynomial \(M_w(T,S)\in \laurent\) such that
\[
B_w(T,S)=-(T-1)\,M_w(T,S).
\]
The Laurent polynomial \(M_w(T,S)\) is called the \emph{metabelian polynomial} of \(w\). Moreover,
\[
M_w(T,S)=0 \iff B_w(T,S)=0 \iff w\in \F''.
\]
In particular, \(M_w\neq 0\) implies that \(w\) remains nontrivial in \(\F/\F''\).
\end{proposition}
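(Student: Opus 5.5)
The plan is to use the fundamental formula of Fox calculus,
\[
w-1=\frac{\partial w}{\partial a}(a-1)+\frac{\partial w}{\partial b}(b-1)\qquad\text{in }\Z[\F],
\]
which follows by induction on word length from the product rule. Applying \(\pi_{\mathrm{ab}}\) and using \(w\in\F'\), so that \(w^{\mathrm{ab}}=1\), gives
\[
0=A_w(T,S)(T-1)+B_w(T,S)(S-1),\qquad A_w:=\bigl(\partial w/\partial a\bigr)^{\mathrm{ab}}.
\]
Since \(\laurent\) is a UFD and \(T-1\), \(S-1\) are non-associate primes, \(T-1\) divides \(B_w(S-1)\) and hence divides \(B_w\). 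Equivalently, substituting \(T=1\) yields \(B_w(1,S)(S-1)=0\) in the domain \(\Z[S^{\pm1}]\), so \(B_w(1,S)=0\). Divisibility by \(T-1\) then follows by writing \(B_w\) as a polynomial in \(T\) over \(\Z[S^{\pm1}]\) after clearing a power of \(T\). This gives a quotient \(M_w\), unique because \(\laurent\) is a domain. Setting \(B_w=-(T-1)M_w\) and cancelling \((T-1)\) in the relation also gives \(A_w=(S-1)M_w\).

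The equivalence \(M_w=0\iff B_w=0\) is immediate, since \(T-1\neq 0\) and we are in a domain. For \(B_w=0\iff w\in\F''\), I would invoke the classical Blanchfield/Fox result that the Magnus embedding is injective on \(\F/\F''\). Concretely, for \(w\in\F\) the pair \(\bigl((\partial w/\partial a)^{\mathrm{ab}},(\partial w/\partial b)^{\mathrm{ab}}\bigr)\) vanishes if and only if \(w\in\F''\) (Fox, Lyndon--Schupp). Given this, the relation \(A_w=(S-1)M_w\) shows that for \(w\in\F'\) both derivatives vanish exactly when \(B_w\) does. So \(B_w=0\iff w\in\F''\).

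The converse direction, \(w\in\F''\Rightarrow B_w=0\), I would also check directly as a sanity step. By the product rule, the map \(w\mapsto(\partial w/\partial b)^{\mathrm{ab}}\) restricted to \(\F'\) is a homomorphism into the additive group, because \(u^{\mathrm{ab}}=1\) for \(u\in\F'\). It is therefore trivial on commutators of elements of \(\F'\).

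The main obstacle is the injectivity direction, \(B_w=0\Rightarrow w\in\F''\), which is the genuinely nontrivial content of the Magnus embedding. I would not reprove it here; I would cite it via the identification \(\F'/\F''\cong H_1\) of the universal abelian cover, viewed as a \(\Z[\Z^2]\)-module. A brief justification would run through the Cayley graph of \(\Z^2\) with edges labelled \(a,b\). The Fox derivatives, abelianized, compute the \(1\)-chain traced by the lift of \(w\), with the \(a\)-edge and \(b\)-edge coefficients given by \(A_w\) and \(B_w\) respectively. That chain is zero exactly when the loop is null-homologous in the cover. In the cover, \(H_1=\pi_1/\pi_1'=\F'/\F''\) because the graph is a free group's cover with fundamental group \(\F'\).

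The final sentence of the statement follows by contraposition: if \(M_w\neq 0\), then \(w\notin\F''\), so \(w\) is nontrivial in \(\F/\F''\).
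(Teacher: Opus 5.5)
Your proposal is correct. The fundamental formula gives \(A_w(T-1)+B_w(S-1)=0\), which yields \(B_w(1,S)=0\) and the factorizations \(B_w=-(T-1)M_w\), \(A_w=(S-1)M_w\); the equivalence with \(w\in\F''\) then reduces to the Blanchfield--Fox injectivity of the Magnus embedding on \(\F/\F''\). Your Cayley-graph sketch actually proves that injectivity rather than only motivating it, since \(H_1\) of a graph is its cycle group \(Z_1\subseteq C_1\). The paper gives no separate proof and simply defers to Fox and Lyndon--Schupp, so your argument is the standard one behind that citation.
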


For positive words \(u\) and \(v\) with the same letter counts, write \(\ab(u)=\ab(v)=(m,n)\) and \(w=u^{-1}v\). For each row \(j\in\{1,\dots,n\}\), let \(A_u(j)\) be the number of \(a\)'s that appear before the \(j\)-th \(b\) in \(u\), and define \(A_v(j)\) likewise. Set
\[
\delta_j:=A_v(j)-A_u(j),\qquad \alpha_j:=\min\{A_u(j),A_v(j)\},\qquad \eta_j:=\operatorname{sgn}(\delta_j),
\]
and let \(\Jset=\{j\in\{1,\dots,n\}:\delta_j\neq 0\}\). The formula below is the structural algebraic statement used throughout the remainder of the paper.

\begin{theorem}[cf.~\cite{LyndonSchupp77}]\label{thm:rowwise-structure}
Let \(u \neq v \in \{a,b\}^\ast\) be positive words with \(\ab(u)=\ab(v)=(m,n)\), and let \(w=u^{-1}v \in \F'\). With the notation above,
\begin{equation}\label{eq:rowwise-interval-block}
M_w(T,S) = T^{-m}S^{-n}\sum_{j\in \Jset}(-\eta_j)\,T^{\alpha_j}\,S^{j-1}\,\bigl(1+T+\cdots+T^{|\delta_j|-1}\bigr).
\end{equation}
\end{theorem}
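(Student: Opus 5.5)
The plan is a direct computation of \(B_w\) from the Fox rules, followed by factoring out \((T-1)\) row by row. Uniqueness in Proposition~\ref{prop:metabelian-polynomial} then identifies the resulting cofactor with \(M_w\).

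\textbf{Step 1: one positive word.} For a positive word \(x=x_1x_2\cdots x_\ell\), repeated use of the product rule gives
\[
\frac{\partial x}{\partial b}=\sum_{i:\,x_i=b} x_1\cdots x_{i-1}.
\]
So the derivative is the sum of the prefixes preceding the occurrences of \(b\). Let \(\ab(x)=(m,n)\). The prefix preceding the \(j\)-th \(b\) contains \(A_x(j)\) copies of \(a\) and \(j-1\) copies of \(b\). Applying \(\pi_{\mathrm{ab}}\) therefore gives
\[
P_x(T,S):=\left(\frac{\partial x}{\partial b}\right)^{\mathrm{ab}}=\sum_{j=1}^{n}T^{A_x(j)}S^{j-1}.
\]

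\textbf{Step 2: the difference \(w=u^{-1}v\).} The product rule and the inverse rule give
\[
\frac{\partial w}{\partial b}=-u^{-1}\frac{\partial u}{\partial b}+u^{-1}\frac{\partial v}{\partial b}.
\]
Since \(\pi_{\mathrm{ab}}(u^{-1})=T^{-m}S^{-n}\), it follows that
\[
B_w=T^{-m}S^{-n}\bigl(P_v-P_u\bigr)=T^{-m}S^{-n}\sum_{j=1}^{n}S^{j-1}\bigl(T^{A_v(j)}-T^{A_u(j)}\bigr).
\]
Rows with \(\delta_j=0\) contribute nothing, so the sum may be restricted to \(j\in\Jset\).

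\textbf{Step 3: factor each row term.} For \(j\in\Jset\), set \(\alpha_j=\min\{A_u(j),A_v(j)\}\). Checking the two cases \(\delta_j>0\) and \(\delta_j<0\) separately gives
\[
T^{A_v(j)}-T^{A_u(j)}=\eta_j\,T^{\alpha_j}\bigl(T^{|\delta_j|}-1\bigr)=\eta_j\,T^{\alpha_j}(T-1)\bigl(1+T+\cdots+T^{|\delta_j|-1}\bigr).
\]
Substituting this into Step 2 yields \(B_w=-(T-1)\,\widetilde M\), where \(\widetilde M\) is exactly the right-hand side of \eqref{eq:rowwise-interval-block}, with the factor \(-\eta_j\) absorbing the minus sign. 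Since \(\laurent\) is an integral domain and \(T-1\neq0\), the factorization \(B_w=-(T-1)M_w\) has a unique cofactor. Hence \(M_w=\widetilde M\).

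There is no real obstacle. The only care needed is the sign bookkeeping: the minus sign coming from \(u^{-1}\), the case split on \(\operatorname{sgn}(\delta_j)\), and the global minus sign in the convention \(B_w=-(T-1)M_w\) must be tracked consistently.
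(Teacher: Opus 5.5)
Your argument is correct. The positive-word prefix-sum formula for $\partial x/\partial b$ and the inverse rule give $B_w=T^{-m}S^{-n}(P_v-P_u)$. The row-wise factorization $T^{A_v(j)}-T^{A_u(j)}=\eta_j T^{\alpha_j}(T-1)(1+T+\cdots+T^{|\delta_j|-1})$, together with uniqueness of the cofactor, then yields exactly \eqref{eq:rowwise-interval-block}, with the signs handled correctly. The paper gives no proof of its own for this statement, only a citation to Lyndon--Schupp. Your computation is the standard argument, and it matches the prefix-based Fox formula the paper uses in the proof of Proposition~\ref{prop:trace-quadratic-fox}.
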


Formula~\eqref{eq:rowwise-interval-block} captures the metabelian footprint of a positive-word difference and provides the sole rowwise algebraic input for the framework developed later. The rest of this section prepares the passage from this two-variable invariant to \(SU(2)\) trace geometry. In particular, we show that after a suitable integer-slope compression the metabelian data still retain nontrivial one-variable information, which will become the leading trace-deficit term on the explicit slice families constructed in Section~\ref{sec:witnesses}.

\subsection{Metabelian persistence and slope specializations}\label{subsec:metabelian}

Before the metabelian polynomial \(M_w(T,S)\) can drive the trace geometry on a one-parameter slice, it must survive the passage from two variables to one. This requires two algebraic facts: first, \(M_w(T,S)\) is nonzero for every hard pair; second, there exists an integer slope specialization \(S=T^r\) that preserves this nontriviality in a form usable later for the trace-deficit analysis.

For an integer slope $r \in \Z$, define the one-variable specialization
\[
p_{w,r}(T) := M_w(T,T^r) \in \Z[T^{\pm1}],
\]
and the associated Laurent polynomial
\[
P_r(T) := (T-1)M_w(T,T^r) = (T-1)p_{w,r}(T) \in \Z[T^{\pm1}].
\]

Before choosing a slope, it is necessary to know that the original two-variable polynomial $M_w(T,S)$ is not identically zero.

\begin{lemma}\label{lem:metabelian-nondeg}
Let $u \neq v \in \{a,b\}^\ast$ be distinct positive words with $\ab(u)=\ab(v)$, and set $w=u^{-1}v\in\F'$. Then $w\notin\F''$. Equivalently, $u$ and $v$ remain distinct in the metabelian quotient $\F/\F''$.
\end{lemma}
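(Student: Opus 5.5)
By Proposition~\ref{prop:metabelian-polynomial}, it suffices to show that \(M_w(T,S)\neq 0\), since \(M_w=0\) is equivalent to \(w\in\F''\). We compute \(M_w\) explicitly with Theorem~\ref{thm:rowwise-structure}, which expresses it as the interval-block sum \eqref{eq:rowwise-interval-block}. The plan therefore reduces to two purely combinatorial steps. First, show that the index set \(\Jset\) is nonempty. Second, show that the blocks indexed by distinct rows \(j\in\Jset\) cannot cancel one another.

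For the first step, note that a positive word with \(\ab=(m,n)\) is completely determined by the nondecreasing sequence \(A(1)\le\cdots\le A(n)\) of \(a\)-counts preceding each \(b\). Indeed, the word is
\[
a^{A(1)}\,b\,a^{A(2)-A(1)}\,b\cdots b\,a^{m-A(n)}.
\]
So if \(\delta_j=0\) for all \(j\), then \(u=v\), contradicting \(u\neq v\). Hence \(\Jset\neq\emptyset\). (If \(n=0\), both words equal \(a^m\), so this case is already excluded by \(u\neq v\).)

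For the second step, multiply \eqref{eq:rowwise-interval-block} by the unit \(T^{m}S^{n}\). The result is
\[
\sum_{j\in\Jset}(-\eta_j)\,S^{j-1}\,T^{\alpha_j}\bigl(1+T+\cdots+T^{|\delta_j|-1}\bigr).
\]
Regard this as a polynomial in \(S\) with coefficients in \(\Z[T]\). Distinct rows \(j\) contribute to distinct powers \(S^{j-1}\). The coefficient of \(S^{j-1}\) for \(j\in\Jset\) is \(\pm T^{\alpha_j}(1+\cdots+T^{|\delta_j|-1})\), which is nonzero because \(|\delta_j|\ge1\). Taking any \(j_0\in\Jset\), the \(S^{j_0-1}\) coefficient is nonzero, so \(M_w\neq0\) and therefore \(w\notin\F''\).

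There is no serious obstacle once Theorem~\ref{thm:rowwise-structure} is available. The only point needing care is the observation that the rows are separated by distinct \(S\)-degrees, which rules out cancellation. If one prefers not to rely on the cited formula, the fallback is to derive it directly. Write \(u=x_1\cdots x_N\), and use \(\partial(u^{-1}v)/\partial b=-u^{-1}\,\partial u/\partial b+u^{-1}\,\partial v/\partial b\) together with \(\partial u/\partial b=\sum_{x_k=b}x_1\cdots x_{k-1}\). After abelianizing, the \(j\)-th \(b\) of \(u\) contributes \(T^{A_u(j)}S^{j-1}\). This gives
\[
B_w=T^{-m}S^{-n}\sum_j S^{j-1}\bigl(T^{A_v(j)}-T^{A_u(j)}\bigr),
\]
and dividing by \(-(T-1)\) yields \eqref{eq:rowwise-interval-block}. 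The same distinct-powers-of-\(S\) argument then shows \(B_w\neq0\) directly.
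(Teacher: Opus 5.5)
Your proposal is correct and follows the paper's proof essentially step for step. You show \(\Jset\neq\varnothing\) by reconstructing a positive word from its prefix counts, rule out cancellation in \eqref{eq:rowwise-interval-block} because distinct rows carry distinct powers of \(S\), and conclude via Proposition~\ref{prop:metabelian-polynomial}; your optional fallback derivation of the interval-block formula from the Fox rules (including the signs \(-\eta_j\)) is also correct, and is a small addition the paper does not spell out.
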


\begin{proof}
Write $\ab(u)=\ab(v)=(m,n)$. The case \(n=0\) cannot occur under the assumptions, since then both positive words with abelianization \((m,0)\) are equal to \(a^m\). Thus \(n\ge 1\). For each $j\in\{1,\dots,n\}$, let $A_u(j)$ and $A_v(j)$ be the prefix counts defined above. A positive word with abelianization $(m,n)$ is uniquely determined by the sequence $(A_x(j))_{j=1}^n$, since one can reconstruct it as
\[
x=a^{A_x(1)} b\, a^{A_x(2)-A_x(1)} b\cdots b\, a^{m-A_x(n)}.
\]
Therefore $u\neq v$ implies that $A_u(j)\neq A_v(j)$ for at least one $j$, hence $\Jset\neq\varnothing$.

Now apply Theorem~\ref{thm:rowwise-structure}. For each $j\in \Jset$, the corresponding contribution to $M_w(T,S)$ is
\[
(-\eta_j)\,T^{-m+\alpha_j}S^{j-1-n}\bigl(1+T+\cdots+T^{|\delta_j|-1}\bigr),
\]
which is a nonzero Laurent polynomial. Since distinct indices $j$ carry distinct powers of $S$, no cancellation can occur between different $j$. Because $\Jset\neq\varnothing$, formula~\eqref{eq:rowwise-interval-block} shows that $M_w(T,S)\neq 0$.

By Proposition~\ref{prop:metabelian-polynomial}, the nonvanishing of $M_w(T,S)$ is equivalent to $w\notin \F''$. This proves the claim.
\end{proof}

Once $M_w(T,S)$ is known to be nonzero, the next step is to choose a specialization that keeps it nonzero.

\begin{lemma}\label{lem:slope-specialization}
Let $f(T,S)\in \mathbb Z[T^{\pm1},S^{\pm1}]$ be a nonzero Laurent polynomial with finite support $\Sigma\subset \mathbb Z^2$:
\[
f(T,S)=\sum_{(i,j)\in \Sigma} c_{i,j}\,T^iS^j.
\]
Then there exists an integer $r$ such that the specialization $f(T,T^r)$ is not identically zero.
\end{lemma}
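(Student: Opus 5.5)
The plan is to show that the substitution $S=T^r$ is injective on the finite support $\Sigma$ for all but finitely many integers $r$. Once injectivity holds, no two monomials of $f$ can merge, so no cancellation occurs and $f(T,T^r)\neq 0$.

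First I record what the substitution does to a single monomial: $T^iS^j\mapsto T^{i+rj}$. Hence
\[
f(T,T^r)=\sum_{k\in\Z}\Bigl(\sum_{\substack{(i,j)\in\Sigma\\ i+rj=k}}c_{i,j}\Bigr)T^k.
\]
If the linear map $\phi_r:(i,j)\mapsto i+rj$ is injective on $\Sigma$, each inner sum has at most one term. Take $\Sigma$ to be the set of exponents with $c_{i,j}\neq0$; it is nonempty because $f\neq0$. Then the coefficient of $T^{\phi_r(i,j)}$ is exactly $c_{i,j}\neq 0$, and $f(T,T^r)$ is not identically zero.

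The main step is producing an $r$ for which $\phi_r$ is injective on $\Sigma$. Consider two distinct points $(i,j)\neq(i',j')$ of $\Sigma$ with $\phi_r(i,j)=\phi_r(i',j')$, so that $(i-i')=-r(j-j')$.
\begin{itemize}
\item If $j=j'$, this forces $i=i'$, a contradiction.
\item If $j\neq j'$, the pair excludes at most the single value $r=-(i-i')/(j-j')$.
\end{itemize}
Since $\Sigma$ is finite, only finitely many values of $r$ are excluded in total. For an explicit choice one can take $r>\max_{(i,j),(i',j')\in\Sigma}|i-i'|$, the usual "large base" choice. With this $r$, the relation $|i-i'|=|r|\,|j-j'|$ with $j\neq j'$ would give $|i-i'|\ge r$, which is impossible.

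I expect no real obstacle here. The only care needed is to define $\Sigma$ as the true support, with nonzero coefficients, so that the conclusion reads off a surviving nonzero coefficient. It is also worth noting that infinitely many slopes work, namely all sufficiently large $r$, which may be useful later when choosing slopes.
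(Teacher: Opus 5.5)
Your proposal is correct and follows the paper's proof: substituting $S=T^r$ sends $T^iS^j$ to $T^{i+rj}$, each pair of support points with $j\neq j'$ rules out at most one slope, and any integer $r$ outside this finite set makes $(i,j)\mapsto i+rj$ injective on $\Sigma$, so no coefficients cancel. Beyond the paper, you treat the case $j=j'$ explicitly, insist that $\Sigma$ be the set of exponents with nonzero coefficients, and give the explicit choice $r>\max_{(i,j),(i',j')\in\Sigma}|i-i'|$; these make the argument slightly more explicit but do not change it.
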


The only obstruction comes from finitely many slopes that force collisions among distinct support exponents.
\begin{proof}
When we substitute $S=T^r$, the term $T^iS^j$ becomes $T^{i+rj}$. Cancellation can occur only if two distinct support points $(i,j)$ and $(i',j')$ are mapped to the same exponent:
\[
i+rj=i'+rj'
\quad\Longleftrightarrow\quad
r=\frac{i-i'}{j'-j}.
\]
For each pair of distinct support points with $j\neq j'$, this determines at most one rational bad slope. Since $\Sigma$ is finite, only finitely many bad slopes arise. Choose an integer $r$ outside that finite set. Then the map $(i,j)\mapsto i+rj$ is injective on $\Sigma$, so no two support monomials collide, and therefore $f(T,T^r)\not\equiv 0$.
\end{proof}

Applying the specialization lemma to the metabelian polynomial gives the following immediate corollary.

\begin{corollary}\label{cor:metabelian-nonvanishing-specialization}
If $w\in \F'\setminus \F''$, there exists an integer $r$ such that the specialized polynomial $p_{w,r}(T)=M_w(T,T^r)$ is not identically zero.
\end{corollary}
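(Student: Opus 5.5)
The corollary follows by chaining Proposition~\ref{prop:metabelian-polynomial} with Lemma~\ref{lem:slope-specialization}. First, take \(w\in\F'\setminus\F''\). Since \(w\in\F'\), Proposition~\ref{prop:metabelian-polynomial} gives a well-defined metabelian polynomial \(M_w(T,S)\in\laurent\) with \(B_w=-(T-1)M_w\). That proposition also gives the equivalence \(M_w=0\iff w\in\F''\). Our hypothesis \(w\notin\F''\) therefore yields \(M_w(T,S)\neq 0\). For the hard pairs of interest, where \(w=u^{-1}v\) with distinct positive words of equal abelianization, this nonvanishing is exactly Lemma~\ref{lem:metabelian-nondeg}. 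It can also be read off directly from the interval-block formula of Theorem~\ref{thm:rowwise-structure}.

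Second, apply Lemma~\ref{lem:slope-specialization} to \(f:=M_w\). This is a nonzero Laurent polynomial with finite support \(\Sigma\subset\Z^2\). The lemma provides an integer \(r\) outside the finitely many bad slopes \((i-i')/(j'-j)\), so that \((i,j)\mapsto i+rj\) is injective on \(\Sigma\). With that \(r\), the specialization \(f(T,T^r)\) is nonzero. By definition this specialization is \(p_{w,r}(T)=M_w(T,T^r)\), which proves the claim.

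There is essentially no obstacle here. The only point to check is that the specialization \(S=T^r\) is a ring homomorphism \(\laurent\to\Z[T^{\pm1}]\). This holds for negative \(r\) as well, since \(T\) is invertible, so \(p_{w,r}\) is a genuine Laurent polynomial and the support argument of Lemma~\ref{lem:slope-specialization} applies verbatim.

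A remark worth recording is that the argument proves more than stated: all but finitely many integers \(r\) work. This is useful later, since the slope can then be chosen to satisfy additional genericity conditions.
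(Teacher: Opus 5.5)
Your proposal is correct and matches the paper's proof: you get \(M_w\neq 0\) from \(w\notin\F''\) via Proposition~\ref{prop:metabelian-polynomial}, then apply Lemma~\ref{lem:slope-specialization} to \(f=M_w\). Your extra remarks are accurate but not needed for the corollary: that \(S\mapsto T^r\) is a ring homomorphism even for negative \(r\), and that all but finitely many slopes work.
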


\begin{proof}
This is a direct application of Lemma~\ref{lem:slope-specialization}. Since $w\in \F'\setminus \F''$, its metabelian polynomial $M_w(T,S)$ is a nonzero Laurent polynomial. Substituting $f(T,S)=M_w(T,S)$ into Lemma~\ref{lem:slope-specialization} yields an integer slope $r$ for which $p_{w,r}(T)\neq 0$.
\end{proof}

The slope specialization established above guarantees nonvanishing only at the level of Laurent polynomials. For exact separation, however, the relevant object is an evaluation on explicit unitary matrices. Before constructing the geometric slice families in Section~\ref{sec:witnesses} that realize this passage, we first formulate the matrix-theoretic target itself, namely the notion of a trace-zero witness.

\subsection{Trace-zero witnesses}

Finally, we connect these algebraic and calculus frameworks to matrices. For any word \(w\in F_2\), a \emph{trace-zero witness} is a pair of \(2 \times 2\) unitary matrices \((A,B)\in SU(2)^2\) such that the trace of the evaluated word \(w(A,B)\), obtained by substituting \(A\) for \(a\) and \(B\) for \(b\), is exactly zero:
\[
\tr\bigl(w(A,B)\bigr)=0.
\]
When \(u,v\in\{a,b\}^\ast\) satisfy \(\ab(u)=\ab(v)\), the difference word \(w:=u^{-1}v\) lies in \(\F'\). A central problem is to determine when such a word admits a trace-zero witness, and in particular how the metabelian polynomial controls that question for positive-word differences. In the present paper, that control is pursued through explicit slice families in \(SU(2)^2\): the algebraic data are used to choose or analyze low-dimensional families on which the trace deficit can be computed effectively.

\begin{remark}
In the standard two-state \(SU(2)\) model for measure-once quantum finite automata on \(\{a,b\}\), a trace-zero witness yields exact separation for the pair \((u,v)\), as explained in~\cite{BMY16}. The remainder of the paper develops such witnesses by a slice-driven route: first a principal slope-visible family is constructed, and then additional concrete slice families and specializations are extracted from it or placed alongside it.
\end{remark}

\section{Explicit slice families and trace-zero witnesses}\label{sec:witnesses}

Building on the algebraic foundations of Section~\ref{sec:preliminaries}, this section activates the analytic engine of the framework. By placing the specialized Fox polynomial on a principal slope-visible slice, we prove a quadratic trace-deficit identity that drives evaluations toward the trace-zero locus. From this geometric core, we then extract a certified menu of explicit witnesses, coming from global dihedral and quaternionic specializations together with local row-wise configurations and supplementary explicit tests. Organized together, these constructions form a layered certified menu and provide an effective separation strategy for a broad range of hard word pairs.

\subsection{Metabelian witnesses on the slope-visible slice}

We now localize the algebraic persistence established in Subsection~\ref{subsec:metabelian}. Expanding the word map near a commuting basepoint on the principal slope-visible family shows that the specialized Fox polynomial governs the leading trace behavior in a precise way. The main result of this subsection is the quadratic identity that makes this control explicit. We then evaluate distinguished specializations the same family to obtain the dihedral and quaternionic witnesses used later in the paper.

\subsubsection{An explicit one-parameter family and the quadratic Fox term}\label{subsec:fox-quadratic}

The results of Subsection~\ref{subsec:metabelian} are purely algebraic. To connect them to actual trace-zero witnesses, a direct link to \(SU(2)\) evaluations is needed. The one-parameter family \((A(\theta),B_r(\theta,t))\) defined below provides that link: it turns the specialized Fox polynomial into the leading term of the trace deficit near a commuting point, so that the nontriviality established above begins to control the geometry of the witness problem.

Fix an integer \(r\in\Z\). Let \(q=e^{i\theta}\), set \(T=q^2=e^{i2\theta}\), and define
\[
A(\theta)=\begin{pmatrix}q&0\\0&q^{-1}\end{pmatrix},\qquad
D_r(\theta)=\begin{pmatrix}q^{r}&0\\0&q^{-r}\end{pmatrix}.
\]
Let
\[
J_0:=\begin{pmatrix}0&1\\-1&0\end{pmatrix},
\qquad
R(t):=\begin{pmatrix}\cos t & \sin t\\ -\sin t & \cos t\end{pmatrix}=\exp(tJ_0)\in SU(2).
\]
The matrix \(J_0\) will reappear several times; it satisfies \(J_0^2=-I\) and \(R(\pi/2)=J_0\). Then set
\[
B_r(\theta,t):=D_r(\theta)\,R(t)\in SU(2).
\]
For \(w\in \F'\), define
\[
U_{w,r}(\theta,t):=w\bigl(A(\theta),B_r(\theta,t)\bigr)\in SU(2),
\qquad
f_{w,r}(T,t):=\tr\bigl(U_{w,r}(\theta,t)\bigr)\in[-2,2].
\]
Because \(w\in \F'\) has trivial abelianization, replacing \(q\) by \(-q\) multiplies \(A(\theta)\) and \(B_r(\theta,t)\) only by central signs, and these signs cancel in the evaluation of \(w\). Hence \(U_{w,r}(\theta,t)\), and therefore its trace, depends only on \(T=q^2\). The notation \(f_{w,r}(T,t)\) is therefore well defined.
The slice is chosen so that the point \(t=0\) is commuting: indeed \(B_r(\theta,0)=D_r(\theta)\) is diagonal and therefore commutes with \(A(\theta)\). Since \(w\in F_2'\) has trivial abelianization, evaluating \(w\) on the commuting pair \((A(\theta),D_r(\theta))\) gives the identity matrix. Hence
\[
U_{w,r}(\theta,0)=I,
\qquad
f_{w,r}(T,0)=2.
\]
The perturbation parameter \(t\) therefore measures how the trace moves away from this commuting reference point.

The first basic property of the slice is that the trace depends on \(t\) through an even real-analytic function.

\begin{lemma}\label{lem:trace-even-perturb}
For each fixed \(\theta\) (equivalently fixed \(T=q^2\)), the function \(t\mapsto f_{w,r}(T,t)\) is real-analytic and even.
Consequently it admits a Taylor expansion
\begin{equation}\label{eq:trace-even-expansion}
f_{w,r}(T,t)=2-\sum_{m\ge 1} C_{2m}(T)\,t^{2m},
\end{equation}
where \(C_{2m}(T)\in \mathbb Q[T^{\pm1}]\). The expansion therefore starts at order \(t^2\).
\end{lemma}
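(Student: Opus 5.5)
The plan is to write the trace as an explicit finite expression in \(q^{\pm1}\), \(\cos t\), \(\sin t\), which gives analyticity and the shape of the coefficients, and then to get evenness from a conjugation symmetry of the slice.

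\emph{Analyticity and polynomial structure.} Write the reduced word \(w\) as a finite product of letters \(a^{\pm1},b^{\pm1}\). Every entry of \(A(\theta)^{\pm1}\) and \(D_r(\theta)^{\pm1}\) is \(0\) or a power of \(q\). Every entry of \(R(t)^{\pm1}=R(\pm t)\) is \(0\), \(\pm\cos t\) or \(\pm\sin t\). Expanding the matrix product shows that \(f_{w,r}(T,t)\) is a finite sum
\[
\sum_{k,\ell,s} c_{k,\ell,s}\, q^{k}(\cos t)^{\ell}(\sin t)^{s},\qquad c_{k,\ell,s}\in\Z .
\]
This is an entire function of \(t\) for fixed \(q\), so it is real-analytic. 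Substituting the Taylor series of \(\cos t\) and \(\sin t\), which have rational coefficients, shows that the coefficient of \(t^{N}\) is a Laurent polynomial in \(q\) with rational coefficients. Moreover, \(f_{w,r}\) is unchanged under \(q\mapsto -q\) for every \(t\), by the central-sign argument already given before the lemma. Hence each coefficient, viewed as a Laurent polynomial in \(q\), agrees with its image under \(q\mapsto-q\) at every point of the unit circle. A Laurent polynomial that vanishes on the whole circle is zero, so the identity holds identically and each coefficient contains only even powers of \(q\). It therefore lies in \(\mathbb Q[T^{\pm1}]\).

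\emph{Evenness.} Put \(K:=\operatorname{diag}(i,-i)\in SU(2)\). A direct check gives \(K R(t) K^{-1}=R(-t)\), since conjugation by \(K\) flips the signs of the off-diagonal entries. The matrix \(K\) is diagonal, so it commutes with \(A(\theta)\) and \(D_r(\theta)\). Hence
\[
K\,A(\theta)\,K^{-1}=A(\theta),\qquad K\,B_r(\theta,t)\,K^{-1}=B_r(\theta,-t).
\]
Since evaluating a word commutes with simultaneous conjugation, this gives
\[
U_{w,r}(\theta,-t)=K\,U_{w,r}(\theta,t)\,K^{-1}.
\]
Taking traces yields \(f_{w,r}(T,-t)=f_{w,r}(T,t)\). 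All odd Taylor coefficients therefore vanish.

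\emph{Conclusion.} The constant term is \(f_{w,r}(T,0)=2\), which was already shown from the commuting basepoint. Writing the remaining even coefficients as \(-C_{2m}(T)\) gives the expansion~\eqref{eq:trace-even-expansion} with \(C_{2m}(T)\in\mathbb Q[T^{\pm1}]\). The only step needing care is passing from "invariant under \(q\mapsto-q\) as a function on the circle" to "only even powers of \(q\)" coefficientwise; this is exactly the identity-of-Laurent-polynomials argument above. The rest is routine.
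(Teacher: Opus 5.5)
Your proof is correct and follows essentially the same route as the paper. Analyticity comes from the explicit matrix entries, evenness from conjugating by \(K=\mathrm{diag}(i,-i)\), and membership of the coefficients in \(\mathbb Q[T^{\pm1}]\) from the rational Taylor series of \(R(t)\) together with the central-sign invariance under \(q\mapsto -q\); your remark that a Laurent polynomial vanishing on the unit circle is identically zero simply makes explicit a step the paper states without comment.
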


\begin{proof}
Real-analyticity in \(t\) is immediate because the entries of \(R(t)\) are \(\cos t\) and \(\sin t\), and the word map is obtained from matrix multiplication and inversion, both of which are analytic in the matrix entries.

For evenness, set \(K:=\mathrm{diag}(i,-i)\in SU(2)\). A direct calculation gives \(K R(t)K^{-1}=R(-t)\). Since \(K\) is diagonal and therefore commutes with both \(A(\theta)\) and \(D_r(\theta)\), it follows that
\[
K B_r(\theta,t)K^{-1}=B_r(\theta,-t),
\qquad
K A(\theta)K^{-1}=A(\theta).
\]
Therefore
\[
U_{w,r}(\theta,-t)
=w\bigl(KA(\theta)K^{-1},\,K B_r(\theta,t)K^{-1}\bigr)
=K\,U_{w,r}(\theta,t)\,K^{-1},
\]
and taking traces yields \(f_{w,r}(T,-t)=f_{w,r}(T,t)\).

Finally, write all entries as functions of \(q=e^{i\theta}\). Since \(R(t)=\exp(tJ_0)\) has Taylor coefficients in \(\mathbb Q\), each Taylor coefficient of the matrix \(U_{w,r}(\theta,t)\) is a finite Laurent polynomial in \(q\) with rational coefficients. Because \(w\in \F'\), replacing \(q\) by \(-q\) multiplies \(A(\theta)\) and \(B_r(\theta,t)\) by central signs whose total contribution in the word evaluation is trivial. Hence \(U_{w,r}(\theta,t)\), and therefore its trace, is invariant under \(q\mapsto -q\). Each trace coefficient is therefore an even Laurent polynomial in \(q\), equivalently a Laurent polynomial in \(T=q^2\). Thus \(C_{2m}(T)\in\mathbb Q[T^{\pm1}]\).
\end{proof}

The next proposition is the precise bridge from the slice to Fox calculus: the linear term of the off-diagonal entry is the specialized abelianized Fox derivative, and the quadratic trace drop is its squared modulus.

\begin{proposition}\label{prop:trace-quadratic-fox}
Write
\[
U_{w,r}(\theta,t)=
\begin{pmatrix}
\alpha_{w,r}(T,t) & \beta_{w,r}(T,t)\\
-\overline{\beta_{w,r}(T,t)} & \overline{\alpha_{w,r}(T,t)}
\end{pmatrix}\in SU(2),
\qquad T=q^2=e^{i2\theta}.
\]
Then
\[
\alpha_{w,r}(T,0)=1,
\qquad
\beta_{w,r}(T,0)=0.
\]
If \(S:=T^r=q^{2r}\), then the first derivative at the commuting point \(t=0\) is
\begin{equation}\label{eq:beta-prime-fox}
\partial_t\beta_{w,r}(T,0)
= S\,B_w(T,S)\big|_{S=T^r}
= T^r\,B_w(T,T^r)
= -T^r(T-1)\,M_w(T,T^r).
\end{equation}
Moreover, the trace expansion \eqref{eq:trace-even-expansion} satisfies
\begin{equation}\label{eq:C2-is-fox-square}
C_2(T)=\bigl|\partial_t\beta_{w,r}(T,0)\bigr|^2
=\bigl|B_w(T,T^r)\bigr|^2
=\bigl|(T-1)M_w(T,T^r)\bigr|^2.
\end{equation}
Since \(T=e^{i2\theta}\) lies on the unit circle, for every Laurent polynomial \(P\in\mathbb R[T^{\pm1}]\) one has \(|P(T)|^2=P(T)P(T^{-1})\). Thus the right-hand side is again a Laurent polynomial in \(T\).
\end{proposition}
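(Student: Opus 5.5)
The plan is to compute the first two Taylor coefficients of \(U_{w,r}(\theta,t)\) at \(t=0\) directly, and to match the first-order term with the Fox derivative. The values \(\alpha_{w,r}(T,0)=1\) and \(\beta_{w,r}(T,0)=0\) are immediate from \(U_{w,r}(\theta,0)=I\), which was established above. For the derivative, write \(B(t):=B_r(\theta,t)=D_r(\theta)R(t)\), so that \(B(0)=D_r\) and \(B'(0)=D_rJ_0\). Write \(w=x_1\cdots x_\ell\) with \(x_k\in\{a^{\pm1},b^{\pm1}\}\), and differentiate the product \(w(A,B(t))\) letter by letter using the Leibniz rule.

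An occurrence \(x_k=b\) contributes \(p_k\,D_rJ_0\,s_k\), where \(p_k=x_1\cdots x_{k-1}\) and \(s_k=x_{k+1}\cdots x_\ell\) are evaluated at the commuting point \((A,D_r)\). An occurrence \(x_k=b^{-1}\) contributes \(-p_k\,D_r^{-1}(D_rJ_0)D_r^{-1}\,s_k\). Because \(U(0)=I\), we have \(s_k=(p_kx_k)^{-1}\) at the basepoint. Substituting this, the \(b\)-term becomes \(\operatorname{Ad}_{p_k}(D_rJ_0D_r^{-1})\), and the \(b^{-1}\)-term becomes \(-\operatorname{Ad}_{p_kb^{-1}}(D_rJ_0D_r^{-1})\). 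These are exactly the group elements, with their signs, that appear in the Fox rules \(\partial(ub)/\partial b=\partial u/\partial b+u\) and \(\partial(x^{-1})/\partial b=-x^{-1}\partial x/\partial b\). Hence
\[
\partial_tU_{w,r}(\theta,0)=\sum_g c_g\,\operatorname{Ad}_{g(A,D_r)}\bigl(D_rJ_0D_r^{-1}\bigr),\qquad \frac{\partial w}{\partial b}=\sum_g c_g\,g .
\]
Now conjugating \(J_0\) by \(\mathrm{diag}(z,z^{-1})\) keeps it off-diagonal and multiplies its upper-right entry by \(z^2\). So \(D_rJ_0D_r^{-1}\) has upper-right entry \(q^{2r}=S\). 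A diagonal element \(g(A,D_r)\) with \(\ab(g)=(i,j)\) has upper-left entry \(q^{i+rj}\), so conjugating by it multiplies the upper-right entry by \(T^iS^j\). Summing over \(g\) gives \(\partial_t\beta_{w,r}(T,0)=S\,B_w(T,S)|_{S=T^r}\) and \(\partial_t\alpha_{w,r}(T,0)=0\). The last equality in \eqref{eq:beta-prime-fox} is Proposition~\ref{prop:metabelian-polynomial}.

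For \eqref{eq:C2-is-fox-square}, I would use the unitarity constraint \(|\alpha|^2+|\beta|^2=1\). Write \(\alpha=1+a_2t^2+O(t^3)\), which is possible because \(\alpha'(0)=0\), and \(\beta=\beta_1t+O(t^2)\) with \(\beta_1=\partial_t\beta_{w,r}(T,0)\). Comparing \(t^2\) coefficients gives \(2\operatorname{Re}a_2+|\beta_1|^2=0\). Since \(f_{w,r}=2\operatorname{Re}\alpha\), we get \(C_2(T)=-2\operatorname{Re}a_2=|\beta_1|^2\). Because \(|T^r|=1\) on the unit circle, this equals \(|B_w(T,T^r)|^2=|(T-1)M_w(T,T^r)|^2\). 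The closing remark follows from \(\overline{T}=T^{-1}\) together with the fact that the coefficients are real.

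The main obstacle I expect is the bookkeeping in the second step. One has to check that left-versus-right ordering conventions and the inverse-letter contributions produce exactly \(\partial w/\partial b\), and not, for instance, its image under an involution or an extra factor \(S^{-1}\). This also determines the stray factor \(S\) in \eqref{eq:beta-prime-fox}. I would first verify the identity on \(w=b\) and \(w=b^{-1}\), and then argue by induction on word length using the Fox product rule. The induction step uses the fact that \(\beta'\) of a product \(UV\) at a point where \(U=V=I\) is \(\beta'_U\) plus \(\beta'_V\) twisted by the conjugation action of \(U(A,D_r)\), where that conjugation is multiplication by \(T^iS^j\) with \((i,j)=\ab(U)\).
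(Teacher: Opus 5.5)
Your proposal is correct and follows essentially the same route as the paper. You differentiate letter by letter, use \(U_{w,r}(\theta,0)=I\) to turn each \(b^{\pm1}\)-contribution into a conjugate of \(J_0\) by a diagonal prefix, and match upper-right entries with the reduced-word Fox formula; you then read off \(C_2\) from unitarity, where your comparison via \(|\alpha|^2+|\beta|^2=1\) is just the paper's identity \(2-\tr U=|\alpha-1|^2+|\beta|^2\) rearranged. The induction check in your last paragraph is unnecessary, since the direct computation already settles the bookkeeping; if you run it anyway, track the logarithmic derivative \(\partial_t U\,U^{-1}\), because proper subwords (and \(b^{\pm1}\) themselves) do not evaluate to \(I\) at the basepoint.
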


\begin{proof}
As noted above, \(w\in F_2'\) and the pair \((A(\theta),D_r(\theta))\) is commuting. Therefore
\[
U_{w,r}(\theta,0)=w\bigl(A(\theta),D_r(\theta)\bigr)=I,
\]
so \(\alpha_{w,r}(T,0)=1\) and \(\beta_{w,r}(T,0)=0\).

To compute the first derivative, note that \(B_r(\theta,t)=D_r(\theta)e^{tJ_0}\), hence
\[
\partial_t B_r(\theta,t)\big|_{t=0}=D_r(\theta)J_0,
\qquad
\partial_t B_r(\theta,t)^{-1}\big|_{t=0}=-J_0 D_r(\theta)^{-1}.
\]
Choose a reduced expression \(w=x_1\cdots x_N\) with \(x_k\in\{a^{\pm1},b^{\pm1}\}\). For each \(k\), let
\[
P_k:=x_1\cdots x_{k-1},
\qquad
Q_k:=x_{k+1}\cdots x_N,
\]
so that \(w=P_kx_kQ_k\).

Differentiate the product
\[
x_1\bigl(A(\theta),B_r(\theta,t)\bigr)\cdots x_N\bigl(A(\theta),B_r(\theta,t)\bigr)
\]
at \(t=0\). Only those factors with \(x_k=b^{\pm1}\) contribute. If \(x_k=b\), the corresponding term is
\[
\begin{aligned}
&P_k\bigl(A(\theta),D_r(\theta)\bigr)\,\partial_t B_r(\theta,t)\big|_{t=0}\,
Q_k\bigl(A(\theta),D_r(\theta)\bigr)\\
&\qquad=P_k\bigl(A(\theta),D_r(\theta)\bigr)D_r(\theta)J_0Q_k\bigl(A(\theta),D_r(\theta)\bigr).
\end{aligned}
\]
Since
\[
P_k\bigl(A(\theta),D_r(\theta)\bigr)D_r(\theta)Q_k\bigl(A(\theta),D_r(\theta)\bigr)
=
U_{w,r}(\theta,0)=I,
\]
this becomes
\[
\Bigl(P_k\bigl(A(\theta),D_r(\theta)\bigr)D_r(\theta)\Bigr)
J_0
\Bigl(P_k\bigl(A(\theta),D_r(\theta)\bigr)D_r(\theta)\Bigr)^{-1}.
\]
If instead \(x_k=b^{-1}\), then the corresponding term is
\[
\begin{aligned}
&P_k\bigl(A(\theta),D_r(\theta)\bigr)\,\partial_t B_r(\theta,t)^{-1}\big|_{t=0}\,
Q_k\bigl(A(\theta),D_r(\theta)\bigr)\\
&\qquad=- P_k\bigl(A(\theta),D_r(\theta)\bigr)J_0 D_r(\theta)^{-1}Q_k\bigl(A(\theta),D_r(\theta)\bigr).
\end{aligned}
\]
Now
\[
P_k\bigl(A(\theta),D_r(\theta)\bigr)D_r(\theta)^{-1}Q_k\bigl(A(\theta),D_r(\theta)\bigr)
=
U_{w,r}(\theta,0)=I,
\]
so this contribution simplifies to
\[
- P_k\bigl(A(\theta),D_r(\theta)\bigr)
J_0
P_k\bigl(A(\theta),D_r(\theta)\bigr)^{-1}.
\]
Thus every first-order term is off-diagonal, and the upper-right entry can be read off by conjugating \(J_0\) by diagonal prefixes.

Write \(\ab(P_k)=(m_k,n_k)\). Because \(A(\theta)\) and \(D_r(\theta)\) are diagonal,
\[
P_k\bigl(A(\theta),D_r(\theta)\bigr)=\operatorname{diag}(z_k,z_k^{-1}),
\qquad
z_k^2=T^{m_k}S^{n_k},
\]
and similarly
\[
P_k\bigl(A(\theta),D_r(\theta)\bigr)D_r(\theta)=\operatorname{diag}(z_k',z_k'^{-1}),
\qquad
(z_k')^2=T^{m_k}S^{n_k+1}.
\]
For any nonzero complex number \(z\),
\[
\operatorname{diag}(z,z^{-1})J_0\operatorname{diag}(z,z^{-1})^{-1}
=
\begin{pmatrix}0&z^2\\-z^{-2}&0\end{pmatrix}.
\]
Hence the upper-right entry of \(\partial_tU_{w,r}(\theta,t)|_{t=0}\) is
\[
\begin{aligned}
\partial_t\beta_{w,r}(T,0)
&=\sum_{x_k=b} T^{m_k}S^{n_k+1}
-\sum_{x_k=b^{-1}} T^{m_k}S^{n_k}\\
&=S\left(
\sum_{x_k=b} T^{m_k}S^{n_k}
-\sum_{x_k=b^{-1}} T^{m_k}S^{n_k-1}
\right).
\end{aligned}
\]
Now apply the Fox rules from Section~\ref{sec:preliminaries}. Repeatedly using
\[
\frac{\partial(uv)}{\partial b}=\frac{\partial u}{\partial b}+u\frac{\partial v}{\partial b},
\qquad
\frac{\partial b}{\partial b}=1,
\qquad
\frac{\partial(b^{-1})}{\partial b}=-b^{-1},
\]
one obtains the standard reduced-word formula
\[
\frac{\partial w}{\partial b}
=
\sum_{x_k=b} P_k
-
\sum_{x_k=b^{-1}} P_k b^{-1}.
\]
After abelianization, this becomes
\[
B_w(T,S)
=
\sum_{x_k=b} T^{m_k}S^{n_k}
-
\sum_{x_k=b^{-1}} T^{m_k}S^{n_k-1}.
\]
Comparing with the previous display yields
\[
\partial_t\beta_{w,r}(T,0)
= S\,B_w(T,S)\big|_{S=T^r}
= T^r\,B_w(T,T^r).
\]
Since \(w\in \F'\), Proposition~\ref{prop:metabelian-polynomial} gives
\[
B_w(T,S)=-(T-1)M_w(T,S),
\]
which proves \eqref{eq:beta-prime-fox}.

For the quadratic trace coefficient, use the identity
\[
2-\mathrm{tr}(U)=|\alpha-1|^2+|\beta|^2,
\qquad
U=\begin{pmatrix}\alpha&\beta\\-\overline\beta&\overline\alpha\end{pmatrix}\in SU(2).
\]
The first-order matrix computed above is purely off-diagonal, so \(\partial_t\alpha_{w,r}(T,0)=0\). Consequently,
\[
\alpha_{w,r}(T,t)-1=O(t^2),
\qquad
\beta_{w,r}(T,t)=\partial_t\beta_{w,r}(T,0)\,t+O(t^2).
\]
Therefore
\[
|\alpha_{w,r}(T,t)-1|^2=O(t^4),
\qquad
|\beta_{w,r}(T,t)|^2=\bigl|\partial_t\beta_{w,r}(T,0)\bigr|^2 t^2+O(t^3),
\]
and hence
\[
2-\mathrm{tr}\bigl(U_{w,r}(\theta,t)\bigr)
=\bigl|\partial_t\beta_{w,r}(T,0)\bigr|^2 t^2+O(t^3).
\]
By Lemma~\ref{lem:trace-even-perturb}, the left-hand side is an even function of \(t\), so the remainder term is in fact \(O(t^4)\). Comparing with \eqref{eq:trace-even-expansion} yields
\[
C_2(T)=\bigl|\partial_t\beta_{w,r}(T,0)\bigr|^2.
\]
Finally, because \(T=e^{i2\theta}\) lies on the unit circle, one has \(|T^r|=1\). Taking absolute values in \eqref{eq:beta-prime-fox} therefore gives
\[
\bigl|\partial_t\beta_{w,r}(T,0)\bigr|^2
=\bigl|B_w(T,T^r)\bigr|^2
=\bigl|(T-1)M_w(T,T^r)\bigr|^2,
\]
which is exactly \eqref{eq:C2-is-fox-square}.
\end{proof}

This explicit quadratic control is the operational payoff of the principal slice. By Equation~\eqref{eq:C2-is-fox-square}, one does not need to analyze the full noncommutative product \(w(A(\theta),B_r(\theta,t))\) near the commuting point \(t=0\): the specialized Fox polynomial alone determines whether the trace has a nonzero quadratic deficit away from \(2\), and it determines the leading size of that deficit. More precisely, the trace begins to drop from \(2\) to second order exactly when \((T-1)M_w(T,T^r)\neq 0\), and the corresponding quadratic deficit is precisely \(\bigl|(T-1)M_w(T,T^r)\bigr|^2\). This is the mechanism used repeatedly in the remaining witness constructions.

\begin{remark}
Beyond this local quadratic deficit, one can also evaluate \(w\) on a Laurent-matrix algebra over \(\Z[T^{\pm1}]\) and obtain an exact global sum-of-two-squares identity for the resulting trace deficit. That complementary identity is not used in the discrete criteria developed below, but it points in the same direction as the completion target formulated later in Conjecture~\ref{conj:finite-menu-cover}.
\end{remark}

\subsubsection{Explicit specializations of the one-parameter family}\label{subsec:algebraic-points}

Keep the notation of Subsubsection~\ref{subsec:fox-quadratic}. Thus
\[
\begin{aligned}
A(\theta)&=\mathrm{diag}(q,q^{-1}),\qquad D_r(\theta)=\mathrm{diag}(q^r,q^{-r}),\\
B_r(\theta,t)&=D_r(\theta)R(t),\qquad T=q^2.
\end{aligned}
\]
Two special evaluations will be used repeatedly. The choice \(t=\pi/2\) gives the matrix
\[
J_0=R(\pi/2)=\begin{pmatrix}0&1\\-1&0\end{pmatrix},
\]
the same matrix already introduced in Subsubsection~\ref{subsec:fox-quadratic}, which conjugates diagonal matrices to their inverses. The choice \(q=i\), equivalently \(T=-1\), gives the diagonal element \(\mathrm{diag}(i,-i)\). These are the two specializations used in the proofs below.

For the dihedral specialization \(t=\pi/2\), write
\[
B_r(\theta):=B_r(\theta,\pi/2)=D_r(\theta)J_0.
\]

\begin{proposition}\label{prop:dihedral-point-trace}
Define
\begin{equation}\label{eq:Delta0-rowprefix}
\Delta_0(w):=\sum_{j=1}^{n}(-1)^{j-1}\bigl(A_v(j)-A_u(j)\bigr)\in\Z.
\end{equation}
Then for every \(\theta\), equivalently every \(T=q^2\) on the unit circle,
\begin{equation}\label{eq:trace-dihedral-point}
\begin{aligned}
w\bigl(A(\theta),B_r(\theta)\bigr)&=A(\theta)^{2(-1)^n\Delta_0(w)},\\
\tr\bigl(w(A(\theta),B_r(\theta))\bigr)&=T^{\Delta_0(w)}+T^{-\Delta_0(w)}.
\end{aligned}
\end{equation}
In particular, \(\Delta_0(w)\) and \eqref{eq:trace-dihedral-point} are independent of the integer \(r\). If \(\Delta_0(w)\neq 0\), then choosing
\[
\theta=\frac{\pi}{4|\Delta_0(w)|}
\]
produces a trace-zero witness on this slice.
\end{proposition}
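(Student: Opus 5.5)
The plan is to use the fact that on the slice \(t=\pi/2\) the pair \((A(\theta),B_r(\theta))\) generates a binary-dihedral-type group. In that group every positive word collapses to a normal form \(A^{e}B^{n}\), and the exponent \(e\) can be read from the row-prefix counts \(A_x(j)\).

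The first step is to record the two relations that drive everything. Since \(J_0\,\mathrm{diag}(z,z^{-1})\,J_0^{-1}=\mathrm{diag}(z^{-1},z)\), we get
\[
B_r(\theta)A(\theta)B_r(\theta)^{-1}=D_r(\theta)A(\theta)^{-1}D_r(\theta)^{-1}=A(\theta)^{-1},
\qquad
B_r(\theta)^2=D_r(\theta)D_r(\theta)^{-1}J_0^2=-I .
\]
Neither relation involves \(r\), and neither does anything that follows. This is the source of the claimed independence of \(r\). The first relation gives the commutation rule \(B\,A^{k}=A^{-k}B\) for all \(k\).

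The second step is the normal form. Write a positive word \(x\) with \(\ab(x)=(m,n)\) as
\[
x=a^{c_0}ba^{c_1}b\cdots ba^{c_n},
\qquad
c_0=A_x(1),\quad c_k=A_x(k+1)-A_x(k)\ (1\le k<n),\quad c_n=m-A_x(n),
\]
which is the reconstruction formula used in the proof of Lemma~\ref{lem:metabelian-nondeg}. Pushing every \(B\) to the right with \(BA^{k}=A^{-k}B\) gives
\[
x(A,B)=A^{e_x}B^{n},
\qquad
e_x=\sum_{k=0}^{n}(-1)^k c_k .
\]
Collecting the coefficient of each \(A_x(j)\) then gives the closed form
\[
e_x=2\sum_{j=1}^n(-1)^{j-1}A_x(j)+(-1)^n m .
\]
Indeed, \(A_x(j)\) enters \(c_{j-1}\) with sign \((-1)^{j-1}\) and enters \(c_j\) with sign \(-(-1)^{j}\), so it appears twice with the same sign. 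The term \(m\) appears only through \(c_n\).

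The third step combines the normal forms of \(u\) and \(v\). Since both words have the same \(m\) and \(n\),
\[
w(A,B)=u(A,B)^{-1}v(A,B)=B^{-n}A^{e_v-e_u}B^{n}=B^{-n}A^{2\Delta_0(w)}B^{n}.
\]
Conjugating by \(B^{n}\) inverts \(A\) exactly \(n\) times, so \(w(A,B)=A^{2(-1)^n\Delta_0(w)}\). The sign \(-I\) coming from \(B^2=-I\) never appears, because the two factors \(B^{\pm n}\) cancel outright.

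Taking traces gives \(q^{2k}+q^{-2k}\) with \(k=\pm\Delta_0(w)\), which equals \(T^{\Delta_0}+T^{-\Delta_0}=2\cos(2\theta\Delta_0)\). At \(\theta=\pi/(4|\Delta_0|)\) this is \(2\cos(\pi/2)=0\).

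The only delicate point is the sign bookkeeping in the second step: getting the alternating signs of the \(c_k\) right, and the exponent \((-1)^n\) after conjugation. I would check these against a small example such as \(u=ab\), \(v=ba\), for which \(\Delta_0=-1\).
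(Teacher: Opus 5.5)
Your proposal is correct and follows the paper's proof essentially step for step. It uses the same two relations \(B_r(\theta)A(\theta)B_r(\theta)^{-1}=A(\theta)^{-1}\) and \(B_r(\theta)^2=-I\), the same normal form \(x(A,B)=A^{\nu(x)}B^n\) with alternating exponent sum, the same telescoping to \(2\sum_{j}(-1)^{j-1}A_x(j)+(-1)^n m\), and the same conjugation by \(B^n\) yielding \(A^{2(-1)^n\Delta_0(w)}\).
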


\begin{proof}
Recall that \(J_0=R(\pi/2)=\begin{psmallmatrix}0&1\\-1&0\end{psmallmatrix}\). Then \(J_0^2=-I\) and
\[
J_0 A(\theta)^k J_0^{-1}=A(\theta)^{-k}\qquad (k\in\Z).
\]
Since \(D_r(\theta)\) commutes with \(A(\theta)\) and satisfies \(J_0 D_r(\theta)J_0^{-1}=D_r(\theta)^{-1}\), the specialized matrix
\[
B_r(\theta)=D_r(\theta)J_0
\]
obeys
\[
B_r(\theta)A(\theta)^k B_r(\theta)^{-1}=A(\theta)^{-k},
\qquad
B_r(\theta)^2=-I.
\]

Write a positive word \(x\in\{a,b\}^{\ast}\) with \(n=\#_b(x)\) in the form
\[
x=a^{x_0}ba^{x_1}b\cdots ba^{x_n},\qquad x_t\ge 0.
\]
Repeatedly commuting \(B_r(\theta)\) past powers of \(A(\theta)\) gives
\[
x\bigl(A(\theta),B_r(\theta)\bigr)=A(\theta)^{\nu(x)}B_r(\theta)^n,
\qquad
\nu(x):=\sum_{t=0}^{n}(-1)^t x_t.
\]
Let \(A_x(j)=x_0+\cdots+x_{j-1}\). Since \(x_{j-1}=A_x(j)-A_x(j-1)\) for \(j\ge 2\), a telescoping sum gives
\[
\nu(x)=2\sum_{j=1}^{n}(-1)^{j-1}A_x(j)+(-1)^n m,
\]
where \(m=\#_a(x)\).

Apply this to \(u\) and \(v\). Because \(\ab(u)=\ab(v)=(m,n)\),
\[
u\bigl(A(\theta),B_r(\theta)\bigr)=A(\theta)^{\nu(u)}B_r(\theta)^n,
\qquad
v\bigl(A(\theta),B_r(\theta)\bigr)=A(\theta)^{\nu(v)}B_r(\theta)^n.
\]
Hence
\[
\begin{aligned}
w\bigl(A(\theta),B_r(\theta)\bigr)
&=u\bigl(A(\theta),B_r(\theta)\bigr)^{-1}v\bigl(A(\theta),B_r(\theta)\bigr)\\
&=B_r(\theta)^{-n}A(\theta)^{\nu(v)-\nu(u)}B_r(\theta)^n\\
&=A(\theta)^{(-1)^n(\nu(v)-\nu(u))}.
\end{aligned}
\]
The \((-1)^n m\) terms cancel in \(\nu(v)-\nu(u)\), so
\[
(-1)^n(\nu(v)-\nu(u))
=2(-1)^n\sum_{j=1}^{n}(-1)^{j-1}\bigl(A_v(j)-A_u(j)\bigr)
=2(-1)^n\Delta_0(w).
\]
This proves the first identity in \eqref{eq:trace-dihedral-point}. Taking traces gives
\[
\tr\bigl(w(A(\theta),B_r(\theta))\bigr)
=q^{2(-1)^n\Delta_0(w)}+q^{-2(-1)^n\Delta_0(w)}
=T^{\Delta_0(w)}+T^{-\Delta_0(w)}.
\]
If \(\Delta_0(w)\neq 0\), choose \(\theta=\pi/(4|\Delta_0(w)|)\). Then \(T=e^{i\pi/(2|\Delta_0(w)|)}\), so
\[
T^{\Delta_0(w)}+T^{-\Delta_0(w)}
=e^{i\,\mathrm{sgn}(\Delta_0(w))\pi/2}+e^{-i\,\mathrm{sgn}(\Delta_0(w))\pi/2}=0.
\]
Only the relations \(B_r(\theta)A(\theta)B_r(\theta)^{-1}=A(\theta)^{-1}\) and \(B_r(\theta)^2=-I\) were used, so the formula is independent of \(r\).
\end{proof}

The term \emph{dihedral point} refers to the subgroup generated by \(A(\theta)\) and \(B_r(\theta)=D_r(\theta)J_0\). The conjugation relation established in the proof, together with \(B_r(\theta)^2=-I\), shows that modulo the center \(\{\pm I\}\) this subgroup is dihedral, with \(A(\theta)\) playing the rotation and \(B_r(\theta)\) the reflection.

\begin{corollary}\label{cor:mu-delta0-equivalence}
Let
\[
u=a^{u_0}ba^{u_1}b\cdots ba^{u_n},
\qquad
v=a^{v_0}ba^{v_1}b\cdots ba^{v_n},
\]
with \(\ab(u)=\ab(v)\), and set \(w:=u^{-1}v\). Define
\[
\mu(u):=\sum_{t=0}^{n}(-1)^t u_t,
\qquad
\mu(v):=\sum_{t=0}^{n}(-1)^t v_t.
\]
Then
\[
\mu(v)-\mu(u)=2\Delta_0(w).
\]
In particular, \(\Delta_0(w)\neq 0\) if and only if \(\mu(u)\neq\mu(v)\). Moreover, the mixed normalizer slice \((A(\theta),J_0)\) is the \(r=0\) member of the dihedral family from Proposition~\ref{prop:dihedral-point-trace}.
\end{corollary}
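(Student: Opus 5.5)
The identity \(\mu(v)-\mu(u)=2\Delta_0(w)\) is the telescoping computation already carried out inside the proof of Proposition~\ref{prop:dihedral-point-trace}. So the plan is to isolate that computation as a statement about a single positive word and then subtract. For a positive word \(x=a^{x_0}ba^{x_1}b\cdots ba^{x_n}\) with \(\#_a(x)=m\), note that the quantity \(\nu(x)\) from that proof coincides with \(\mu(x)=\sum_{t=0}^n(-1)^tx_t\). I will first re-derive the identity
\[
\mu(x)=2\sum_{j=1}^{n}(-1)^{j-1}A_x(j)+(-1)^n m .
\]
Use the prefix sums \(A_x(j)=x_0+\cdots+x_{j-1}\) for \(1\le j\le n\), and set \(A_x(n+1):=m\). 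Then \(x_0=A_x(1)\), \(x_{t}=A_x(t+1)-A_x(t)\) for \(1\le t\le n-1\), and \(x_n=m-A_x(n)\). Substitute these into \(\mu(x)\) and collect the coefficient of each \(A_x(j)\). The index \(j\) appears with sign \((-1)^{j-1}\) from \(x_{j-1}\) and with sign \(-(-1)^{j}=(-1)^{j-1}\) from \(x_j\). This gives \(2(-1)^{j-1}\) for each \(1\le j\le n\), plus the boundary term \((-1)^n m\) coming from \(x_n\).

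Next apply the identity to \(u\) and to \(v\). Since \(\ab(u)=\ab(v)=(m,n)\), both words have the same \(n\) (so the sums run over the same range) and the same \(m\). The boundary terms \((-1)^nm\) therefore cancel, leaving
\[
\mu(v)-\mu(u)=2\sum_{j=1}^n(-1)^{j-1}\bigl(A_v(j)-A_u(j)\bigr)=2\Delta_0(w)
\]
by definition \eqref{eq:Delta0-rowprefix}. The equivalence \(\Delta_0(w)\neq0\iff\mu(u)\neq\mu(v)\) is then immediate, since multiplication by \(2\) is injective on \(\Z\).

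For the last assertion, set \(r=0\). Then \(D_0(\theta)=I\), so \(B_0(\theta)=D_0(\theta)J_0=J_0\), and the dihedral family \((A(\theta),B_r(\theta))\) of Proposition~\ref{prop:dihedral-point-trace} becomes exactly \((A(\theta),J_0)\). This is a direct check.

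There is no real obstacle here. The only point requiring care is the index bookkeeping at the two ends of the telescoping sum, namely the \(x_0\) term and the \(x_n\) term. In particular, the \(m\)-term carries the sign \((-1)^n\) and must be identical for \(u\) and \(v\); this is precisely where the hypothesis \(\ab(u)=\ab(v)\) enters.
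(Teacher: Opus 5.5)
Your proposal is correct and takes essentially the same route as the paper. The paper also obtains \(\mu(v)-\mu(u)=2\Delta_0(w)\) from the telescoping identity \(\nu(x)=2\sum_{j=1}^{n}(-1)^{j-1}A_x(j)+(-1)^n m\) in the proof of Proposition~\ref{prop:dihedral-point-trace}, and then sets \(r=0\) to get \(B_0(\theta)=J_0\). The only difference is that you re-derive that identity explicitly and skip the harmless \((-1)^n\) factor the paper carries along.
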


\begin{proof}
The calculation in Proposition~\ref{prop:dihedral-point-trace} shows that for any positive word
\[
x=a^{x_0}ba^{x_1}b\cdots ba^{x_n},
\]
the exponent occurring in \(x(A(\theta),B_r(\theta))\) is
\[
\nu(x)=\sum_{t=0}^{n}(-1)^t x_t.
\]
Applying this to \(u\) and \(v\) gives \(\mu(u)=\nu(u)\) and \(\mu(v)=\nu(v)\), and the same computation yields
\[
(-1)^n\bigl(\mu(v)-\mu(u)\bigr)=2(-1)^n\Delta_0(w).
\]
Hence \(\mu(v)-\mu(u)=2\Delta_0(w)\). Setting \(r=0\) gives \(B_0(\theta)=J_0\), so the dihedral point becomes the mixed normalizer slice \((A(\theta),J_0)\).
\end{proof}

\begin{remark}
Combinatorially, \(\mu(x)\) is a signed \(a\)-count: each occurrence of the letter \(a\) contributes \(+1\) or \(-1\) according to whether the number of preceding \(b\)'s is even or odd.
\end{remark}

The dihedral family therefore has two equivalent readings: one through the alternating row-prefix invariant \(\Delta_0(w)\), one through the signed \(a\)-count \(\mu(v)-\mu(u)\). The next proposition records a different algebraic specialization of the same principal family.

\begin{proposition}\label{prop:q-i-collapse}
Fix \(r\in\Z\) and specialize the family to \(T=-1\), equivalently \(q=i\). Then
\begin{equation}\label{eq:trace-q-i-collapse}
\tr\bigl(U_{w,r}(\theta,t)\bigr)\big|_{T=-1}=2\cos\bigl(t\,P_r(-1)\bigr),
\end{equation}
where \(P_r(T)=(T-1)M_w(T,T^r)\). In particular, if \(P_r(-1)\neq 0\), then the choice
\[
t=\frac{\pi}{2|P_r(-1)|}
\]
produces a trace-zero witness on this slice.
\end{proposition}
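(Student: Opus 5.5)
At \(q=i\) one has \(A=\mathrm{diag}(i,-i)=:K\) and \(D_r=\mathrm{diag}(i^r,i^{-r})=K^r\). The plan is to show that every prefix evaluation conjugates \(R(t)\) to \(R(\pm t)\), up to central signs. The whole word then collapses to a single rotation \(R(ct)\), and we identify \(c\) with \(P_r(-1)\) through Proposition~\ref{prop:trace-quadratic-fox}.

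\textbf{Step 1 (conjugation rules).} The key relation is \(KR(t)K^{-1}=R(-t)\), already used in Lemma~\ref{lem:trace-even-perturb}. In addition \(K^2=-I\), and \(R(t)\) commutes with \(\pm I\). Since \(\ab(w)=0\), we may absorb \(D_r=K^r\) into the prefixes. Concretely, write \(w(A,B_r)\) as a product of factors \(K^{\pm1}\) and \((K^rR(t))^{\pm1}\). Move all \(K\)-powers to the left, using \(R(s)K=KR(-s)\) and \(R(s)K^{-1}=K^{-1}R(-s)\). Each \(b^{\epsilon}\) occurrence then contributes a factor \(R(\epsilon\sigma_k t)\), where the sign is \(\sigma_k=(-1)^{e_k}\). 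Here \(e_k\) is the total \(K\)-exponent to its left, namely \(m_k+rn_k\), plus \(r\) when \(\epsilon=+1\). The leftover \(K\)-power has total exponent \(0\), because \(\ab(w)=0\), so it is trivial. All the \(R\)'s commute, so
\[
w(A,B_r)\big|_{q=i}=R(ct),\qquad c=\sum_{x_k=b}(-1)^{m_k+r(n_k+1)}-\sum_{x_k=b^{-1}}(-1)^{m_k+rn_k}.
\]
As a result, \(\tr=2\cos(ct)\).

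\textbf{Step 2 (identify \(c\)).} Compare with the displayed formula in the proof of Proposition~\ref{prop:trace-quadratic-fox}:
\[
\partial_t\beta_{w,r}(T,0)=\sum_{x_k=b}T^{m_k}S^{n_k+1}-\sum_{x_k=b^{-1}}T^{m_k}S^{n_k}.
\]
Setting \(T=-1\) and \(S=(-1)^r\) gives exactly \(c\). By \eqref{eq:beta-prime-fox}, this value equals \(T^rB_w(T,T^r)|_{T=-1}\), which is \((-1)^{r}\cdot(-1)\cdot P_r(-1)\) up to the sign conventions. So \(c=\pm P_r(-1)\). Since cosine is even, \(2\cos(ct)=2\cos(tP_r(-1))\). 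As a consistency check, note that \(\partial_t\) of the upper-right entry of \(R(ct)\) at \(t=0\) is \(c\), which matches directly.

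\textbf{Step 3 (witness).} If \(P_r(-1)\neq0\), take \(t=\pi/(2|P_r(-1)|)\). Then \(\cos(tP_r(-1))=\cos(\pm\pi/2)=0\), which gives a trace-zero witness.

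The main obstacle is the sign bookkeeping in Step 1. One must track how powers of \(K\) pass through \(R(\pm s)\) in both directions, including the factors \(B_r^{-1}=R(-t)K^{-r}\), and check that the central \(-I\) factors from \(K^2=-I\) cancel. The cleanest route is to keep exponents of \(K\) as integers, without reducing them modulo \(4\), until the final total exponent \(0\). With that done, the identification with the Fox sum is immediate.
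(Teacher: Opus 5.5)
Your proposal is correct and follows essentially the paper's route: you collapse the evaluation inside the group generated by \(K\) and the rotations \(R(s)\) to a single rotation \(R(ct)\), then identify \(c=\pm P_r(-1)\) through Proposition~\ref{prop:trace-quadratic-fox}. The differences are cosmetic: you kill the leftover \(K\)-power because its total integer exponent is \(0\), where the paper evaluates at \(t=0\), and you match \(c\) termwise with the Fox sum, where the paper uses the derivative of the upper-right entry (your ``consistency check''). One small bookkeeping point: if the \(K\)-powers are moved to the left, each \(R\) is actually flipped by the parity of the \(K\)-exponent to its \emph{right}. That agrees with the parity of the exponent to its left only because the total exponent is \(0\), so your formula for \(c\) is correct as stated.
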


\begin{proof}
Set
\[
K:=A(\theta)\big|_{q=i}=\mathrm{diag}(i,-i),
\qquad
D_r(\theta)\big|_{q=i}=K^r,
\qquad
B_r(\theta,t)\big|_{q=i}=K^rR(t).
\]
Let
\[
\mathcal R:=\{R(s):s\in\R\}\le SU(2).
\]
This is an abelian subgroup, and a direct calculation gives
\[
K R(s)K^{-1}=R(-s).
\]
Hence every element of \(\mathcal G:=\langle K,\mathcal R\rangle\) can be written as \(K^mR(s)\), with the rotation parameter changing sign whenever a factor is moved across an odd power of \(K\).

Evaluate \(w\) at \((K,K^rR(t))\) and write
\[
U(t):=U_{w,r}(\theta,t)\big|_{T=-1}\in\mathcal G.
\]
Each occurrence of \(b\) or \(b^{-1}\) contributes one factor \(R(\pm t)\), so after all such factors are moved to the right and combined inside the abelian group \(\mathcal R\), one gets
\begin{equation}\label{eq:U-is-rotation-mt}
U(t)=K^{m_0}R(mt)
\end{equation}
for some integers \(m_0,m\).

Now set \(t=0\). Then \(B_r(\theta,0)\big|_{q=i}=K^r\), so both substituted generators are powers of \(K\) and therefore commute. Since \(w\in F_2'\), it follows that
\[
U(0)=I.
\]
Because \(R(0)=I\), equation~\eqref{eq:U-is-rotation-mt} yields \(K^{m_0}=I\). Thus \(m_0\equiv 0 \pmod 4\), and therefore
\[
U(t)=R(mt).
\]
Taking traces gives
\[
\tr(U(t))=\tr(R(mt))=2\cos(mt).
\]

To identify the integer \(m\), note that \(R(mt)=I+mtJ_0+O(t^2)\), so its upper-right entry has derivative \(m\) at \(t=0\). In the notation of Proposition~\ref{prop:trace-quadratic-fox}, this means
\[
\partial_t\beta_{w,r}(T,0)\big|_{T=-1}=m.
\]
By \eqref{eq:beta-prime-fox},
\[
\partial_t\beta_{w,r}(T,0)=-T^r(T-1)M_w(T,T^r)=-T^rP_r(T).
\]
At \(T=-1\), one has \(T^r=(-1)^r=\pm 1\), hence \(m=\pm P_r(-1)\). Since \(\cos\) is even, this gives
\[
\tr(U(t))=2\cos(mt)=2\cos\bigl(tP_r(-1)\bigr),
\]
which is exactly \eqref{eq:trace-q-i-collapse}.
\end{proof}

The term \emph{quaternionic point} refers to the specialization \(T=-1\), equivalently \(q=i\). At that value one has \(K^2=-I\), and together with \(J_0=R(\pi/2)\) the relations \(J_0^2=K^2=-I\) and \(KJ_0=-J_0K\) generate the quaternion group \(Q_8\)~\cite{ConwaySmith03}. Thus the distinguished algebraic point on this slice lies naturally in a quaternionic configuration.

The formula at \(T=-1\) depends only on the specialized value \(P_r(-1)\). Since exact separation is unchanged under automorphisms of \(F_2\), it is enough to obtain a nonzero value at \(T=-1\) for some automorphic representative. This gives the following reformulation of Proposition~\ref{prop:q-i-collapse}.

\begin{proposition}\label{prop:aut-quaternionic}
Let \(w\in F_2'\) and let \(\phi\in\mathrm{Aut}(F_2)\). Fix \(r\in\Z\) and write
\[
P_r^{\phi}(T):=(T-1)M_{\phi(w)}(T,T^r).
\]
If \(P_r^{\phi}(-1)\neq 0\), then there exist \(A,B\in SU(2)\) such that \(\tr(w(A,B))=0\).
\end{proposition}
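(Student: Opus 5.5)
The plan is to apply Proposition~\ref{prop:q-i-collapse} to the word \(\phi(w)\) rather than to \(w\), and then pull the resulting witness back along \(\phi\).

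First, check that \(\phi(w)\) satisfies the hypotheses of Proposition~\ref{prop:q-i-collapse}. That proposition only requires a word in \(F_2'\). Since \(F_2'\) is characteristic in \(F_2\), we have \(\phi(w)\in F_2'\). So \(M_{\phi(w)}\) is defined by Proposition~\ref{prop:metabelian-polynomial}, and \(P_r^{\phi}(T)=(T-1)M_{\phi(w)}(T,T^r)\) is exactly the polynomial \(P_r\) attached to \(\phi(w)\). The positivity of \(u,v\) played no role in the proofs of Propositions~\ref{prop:trace-quadratic-fox} and~\ref{prop:q-i-collapse}, which used only a reduced expression of \(w\in F_2'\). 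Hence, with \(t=\pi/(2|P_r^{\phi}(-1)|)\), the pair
\[
(A',B'):=\bigl(\mathrm{diag}(i,-i),\ \mathrm{diag}(i,-i)^rR(t)\bigr)
\]
satisfies \(\tr\bigl(\phi(w)(A',B')\bigr)=0\).

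Second, transport the witness. Write \(\phi(a)=\phi_a(a,b)\) and \(\phi(b)=\phi_b(a,b)\) as words, and set
\[
A:=\phi_a(A',B'),\qquad B:=\phi_b(A',B')\in SU(2).
\]
Let \(\rho\colon F_2\to SU(2)\) be the homomorphism with \(a\mapsto A'\) and \(b\mapsto B'\). Then \(\rho\circ\phi\) is the homomorphism sending \(a\mapsto A\) and \(b\mapsto B\), because both homomorphisms agree on the generators. Consequently
\[
w(A,B)=(\rho\circ\phi)(w)=\rho(\phi(w))=\phi(w)(A',B'),
\]
and so \(\tr(w(A,B))=0\).

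There is no real obstacle here. The only points that need care are that \(F_2'\) is invariant under \(\phi\), and that the evaluation identity \(w(A,B)=\phi(w)(A',B')\) is read in the correct direction: we substitute the images of the generators under \(\phi\), not under \(\phi^{-1}\).
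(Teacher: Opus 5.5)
Your proposal is correct and is essentially the paper's proof: apply Proposition~\ref{prop:q-i-collapse} to \(\phi(w)\) to obtain a witness \((A_0,B_0)\), then set \(A=\phi(a)(A_0,B_0)\) and \(B=\phi(b)(A_0,B_0)\), and use functoriality, \(w(A,B)=\phi(w)(A_0,B_0)\). Your explicit remark that \(\phi(w)\in F_2'\) because \(F_2'\) is characteristic is a small point the paper leaves implicit, and it is what makes \(M_{\phi(w)}\) well defined.
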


\begin{proof}
By Proposition~\ref{prop:q-i-collapse}, the condition \(P_r^{\phi}(-1)\neq 0\) gives matrices \(A_0,B_0\in SU(2)\) such that
\[
\tr\bigl(\phi(w)(A_0,B_0)\bigr)=0.
\]
Set
\[
A:=\phi(a)(A_0,B_0),
\qquad
B:=\phi(b)(A_0,B_0).
\]
Then \(A,B\in SU(2)\), and functoriality of word evaluation gives
\[
w(A,B)
=w\bigl(\phi(a)(A_0,B_0),\phi(b)(A_0,B_0)\bigr)
=\phi(w)(A_0,B_0).
\]
Taking traces yields \(\tr(w(A,B))=0\).
\end{proof}

\subsection{Local and supplementary witnesses}

While the specialized algebraic points developed in Subsection~\ref{subsec:algebraic-points} provide strong global criteria, they do not explicitly describe the local combinatorial shape of a hard positive-word pair. The present subsection develops more targeted forcing mechanisms to address that gap, beginning with a complete analysis of the first nontrivial local regime.

\subsubsection{The case \texorpdfstring{\(|\Jset|=1\)}{|\Jset|=1}: classification and separation}\label{subsec:J1}

When \(|\Jset|=1\), the two positive words differ in exactly one local move: a block \(a^d\) crosses a single adjacent \(b\). The next proposition combines the resulting classification with the corresponding exact separation statement.

\begin{proposition}\label{prop:J1-separated}
Let \(u\neq v\in\{a,b\}^{\ast}\) be positive words with \(\ab(u)=\ab(v)\), set \(w=u^{-1}v\), and assume \(\Jset=\{j_0\}\). Then \(w\) is conjugate in \(F_2\) to \([a^d,b]^{\varepsilon}\) for some integer \(d\ge 1\) and some sign \(\varepsilon\in\{\pm1\}\). In particular, there exist \(A,B\in SU(2)\) such that \(\tr(w(A,B))=0\).
\end{proposition}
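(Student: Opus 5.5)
The plan is to read off the shape of \(u\) and \(v\) directly from the prefix data \((A_u(j))_j\) and \((A_v(j))_j\), exactly as in the proof of Lemma~\ref{lem:metabelian-nondeg}. Once the conjugacy to \([a^d,b]^{\pm1}\) is established, the witness comes from an explicit dihedral evaluation.

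\textbf{Step 1: the classification.} Write \(u=a^{u_0}ba^{u_1}b\cdots ba^{u_n}\) and \(v=a^{v_0}ba^{v_1}b\cdots ba^{v_n}\), so that \(A_x(j)=x_0+\cdots+x_{j-1}\). The hypothesis \(\Jset=\{j_0\}\) says that \(A_u(j)=A_v(j)\) for all \(j\neq j_0\). Also \(A_u(0)=A_v(0)=0\) and the totals are both \(m\). Taking successive differences therefore gives \(u_t=v_t\) for all \(t\notin\{j_0-1,j_0\}\), together with
\[
v_{j_0-1}=u_{j_0-1}+\delta_{j_0},\qquad v_{j_0}=u_{j_0}-\delta_{j_0}.
\]
Assume first that \(\delta:=\delta_{j_0}>0\), and set \(d:=\delta\) and \(\alpha:=\alpha_{j_0}=u_{j_0-1}\). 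Let \(P\) be the common prefix of \(u\) and \(v\) up to and including the \((j_0-1)\)-th \(b\) (empty if \(j_0=1\)). Let \(R:=a^{u_{j_0}-d}\,b\,a^{u_{j_0+1}}\cdots\) be the common suffix. With these choices,
\[
u=P\,a^{\alpha}\,b\,a^{d}R,\qquad v=P\,a^{\alpha}a^{d}\,b\,R .
\]
Here \(u_{j_0}-d=v_{j_0}\ge 0\), so \(R\) is a genuine positive word. Free cancellation then gives
\[
w=u^{-1}v=R^{-1}a^{-d}b^{-1}a^{-\alpha}P^{-1}P\,a^{\alpha}a^{d}bR=R^{-1}[a^d,b]\,R .
\]
If instead \(\delta<0\), the roles of \(u\) and \(v\) are swapped. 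The same computation gives \(v^{-1}u=R^{-1}[a^d,b]R\) with \(d=|\delta|\), hence \(w=R^{-1}[a^d,b]^{-1}R\). Since \(\delta_{j_0}\neq 0\), we have \(d\ge1\).

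\textbf{Step 2: the witness.} Traces in \(SU(2)\) are invariant under conjugation and satisfy \(\tr(U^{-1})=\tr(U)\). It therefore suffices to find \(A,B\in SU(2)\) with \(\tr([A^d,B])=0\). Take the mixed normalizer slice \(A=A(\theta)=\mathrm{diag}(q,q^{-1})\) and \(B=J_0\). Since \(J_0^{-1}A^{d}J_0=A^{-d}\), we get
\[
[A^d,J_0]=A^{-d}J_0^{-1}A^{d}J_0=A^{-2d},
\qquad\text{so}\qquad
\tr\bigl([A^d,J_0]\bigr)=2\cos(2d\theta).
\]
Choosing \(\theta=\pi/(4d)\) makes this trace vanish. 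Alternatively, one can note that the computation of Step 1 gives \(\Delta_0(w)=(-1)^{j_0-1}\delta_{j_0}\neq0\) and then invoke Proposition~\ref{prop:dihedral-point-trace} directly. Recording both routes shows that the \(|\Jset|=1\) case is already covered by the dihedral criterion.

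\textbf{Main obstacle.} The only delicate point is the bookkeeping in Step 1. One must check that the decomposition \(u=Pa^{\alpha}ba^{d}R\) is legitimate at the boundary indices. When \(j_0=1\), \(P\) is empty. When \(j_0=n\), \(R\) is a pure \(a\)-power. One must also check that \(u_{j_0}-d\ge0\), which follows from \(v_{j_0}\ge0\). I would handle these edge cases by defining \(P\) and \(R\) uniformly through the exponent sequences rather than case by case.
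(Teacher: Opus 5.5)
Your proposal is correct and follows the paper's proof essentially verbatim: it recovers the exponent sequences from the extended prefix data in the same way, uses the same factorization $u=P\,b\,a^dQ$, $v=P\,a^d b\,Q$, and chooses the same witness $(A(\pi/(4d)),J_0)$. Your $P a^{\alpha}$ plays the role of the paper's $P$; note that the paper's $\alpha_{j_0}$ is the prefix count $A_u(j_0)$, not $u_{j_0-1}$, but this mislabel is harmless because $a^{\alpha}$ cancels. Your side remark that $\Delta_0(w)=(-1)^{j_0-1}\delta_{j_0}\neq 0$ is also correct, and it shows that criterion (c) is already covered by the dihedral criterion (a), a redundancy the paper does not point out.
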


\begin{proof}
Write
\[
u=a^{x_0}ba^{x_1}b\cdots ba^{x_n},
\qquad
v=a^{y_0}ba^{y_1}b\cdots ba^{y_n},
\]
where \(n=\#_b(u)=\#_b(v)\). For \(x\in\{u,v\}\), set \(A_x(0):=0\) and \(A_x(n+1):=\#_a(x)\). Then
\[
A_x(t+1)-A_x(t)=x_t\qquad (0\le t\le n).
\]
Since \(\Jset=\{j_0\}\), one has \(A_u(j)=A_v(j)\) for every \(j\in\{1,\dots,n\}\setminus\{j_0\}\). The endpoint conventions give \(A_u(0)=A_v(0)=0\), and equality of the total \(a\)-counts gives \(A_u(n+1)=A_v(n+1)\). Thus
\[
\delta:=A_v(j_0)-A_u(j_0)\neq 0
\]
is the unique discrepancy among the extended prefix data. Using \(A_x(t+1)-A_x(t)=x_t\), one immediately obtains
\[
y_t=x_t \quad (t\notin\{j_0-1,j_0\}),
\qquad
y_{j_0-1}-x_{j_0-1}=\delta,
\qquad
y_{j_0}-x_{j_0}=-\delta.
\]
Let \(d:=|\delta|\). If \(\delta>0\), then a block \(a^d\) has moved from the \(j_0\)-th \(a\)-run of \(u\) to the preceding one, so for suitable words \(P,Q\),
\[
u=P\,b\,a^d\,Q,
\qquad
v=P\,a^d\,b\,Q.
\]
A direct multiplication gives
\[
w=(Pba^dQ)^{-1}(Pa^dbQ)=Q^{-1}[a^d,b]Q.
\]
If \(\delta<0\), then the same argument with \(u\) and \(v\) interchanged yields
\[
u=P\,a^d\,b\,Q,
\qquad
v=P\,b\,a^d\,Q,
\]
and therefore
\[
w=(Pa^dbQ)^{-1}(Pba^dQ)=Q^{-1}[a^d,b]^{-1}Q.
\]
Thus in all cases \(w=Q^{-1}[a^d,b]^{\varepsilon}Q\) for some \(\varepsilon\in\{\pm1\}\).

Choose
\[
\theta:=\frac{\pi}{4d},
\qquad
A:=A(\theta)=\begin{pmatrix}e^{i\theta}&0\\0&e^{-i\theta}\end{pmatrix},
\qquad
B:=J_0=\begin{pmatrix}0&1\\-1&0\end{pmatrix}.
\]
Then \(A^d=\mathrm{diag}(e^{i\pi/4},e^{-i\pi/4})\) and \(J_0 A^d J_0^{-1}=A^{-d}\), so
\[
[a^d,b](A,B)=A^{-d}B^{-1}A^dB=A^{-2d}=\mathrm{diag}(e^{-i\pi/2},e^{i\pi/2}).
\]
Therefore \(\tr([a^d,b](A,B))=0\). Since trace is invariant under conjugation and satisfies \(\tr(M^{-1})=\tr(M)\) for \(M\in SU(2)\), the same pair \((A,B)\) also gives \(\tr(w(A,B))=0\).
\end{proof}

This classification completely resolves the first nontrivial local discrepancy. For hard pairs beyond this direct reduction, we conclude the certified menu with two supplementary mechanisms, one exploiting binary-dihedral symmetry and the other working in Fricke--Vogt trace coordinates.

\subsubsection{Character coordinates and further explicit tests}\label{subsec:char-coords}

These final components enlarge the certified witness library in two complementary directions. We begin with an elementary matrix slice that exploits binary-dihedral symmetry and then turn to a criterion formulated in Fricke--Vogt trace coordinates.

\begin{proposition}\label{prop:dihedral-sieve}
Let
\[
D_\infty=\langle s,t\mid s^2=t^2=1\rangle,
\qquad r:=st,
\]
and let \(\rho:F_2\to D_\infty\) be the homomorphism defined by \(\rho(a)=s\), \(\rho(b)=t\). For each \(w\in F_2'\), there exists a unique integer \(\kappa(w)\in\Z\) such that
\[
\rho(w)=r^{2\kappa(w)}.
\]
Consequently, for every pair \(A,B\in SU(2)\) satisfying \(A^2=B^2=-I\), one has
\[
w(A,B)=\pm(AB)^{2\kappa(w)}.
\]
In particular, if \(\kappa(w)\neq 0\), then \(w\) admits a trace-zero witness in \(SU(2)\).
\end{proposition}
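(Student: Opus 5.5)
Three things need proof: the existence and uniqueness of \(\kappa(w)\), the matrix identity for pairs with \(A^2=B^2=-I\), and the trace-zero consequence.

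\textbf{Step 1: structure of \(D_\infty\).} Every element of \(D_\infty\) has a unique normal form \(r^k\) or \(r^k s\) with \(k\in\Z\). The relation \(srs^{-1}=ts=r^{-1}\) gives \(\langle r\rangle\cong\Z\) as an index-two normal subgroup. Uniqueness of \(\kappa(w)\) is then immediate, since \(r\) has infinite order.

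For existence, note that the abelianization of \(D_\infty\) is \((\Z/2)^2\), generated by the images of \(s\) and \(t\). It follows that \([D_\infty,D_\infty]=\langle r^2\rangle\); for instance \([s,t]=stst=r^2\). Since \(w\in F_2'\), we get \(\rho(w)\in[D_\infty,D_\infty]=\langle r^2\rangle\), which yields the integer \(\kappa(w)\).

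\textbf{Step 2: the matrix identity.} Let \(A^2=B^2=-I\) and let \(\bar G\) be the image of \(\langle A,B\rangle\) in \(SO(3)=SU(2)/\{\pm I\}\). There \(\bar A^2=\bar B^2=1\), so \(\bar a\mapsto\bar A\), \(\bar b\mapsto\bar B\) factors through \(\rho\). The induced homomorphism \(\bar\sigma:D_\infty\to\bar G\) satisfies \(\overline{w(A,B)}=\bar\sigma(\rho(w))=\overline{(AB)^{2\kappa(w)}}\). Lifting back to \(SU(2)\) gives \(w(A,B)=\pm(AB)^{2\kappa(w)}\).

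I would use the central quotient rather than a direct lift, because \(D_\infty\to SU(2)\), \(s\mapsto A\), is not a homomorphism when \(A^2=-I\). Checking that \(\rho(w)\) lands in \(\langle r^2\rangle\) rather than merely in \(\langle r\rangle\) is the one point needing care, and Step 1 handles it. One could pin down the sign more precisely, but that is not needed.

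\textbf{Step 3: the witness.} Suppose \(\kappa:=\kappa(w)\neq0\). Take the quaternionic-type pair \(A=\mathrm{diag}(i,-i)=K\) and \(B=R(\phi)K\) for a parameter \(\phi\). Then \(B^2=R(\phi)KR(\phi)K=R(\phi)R(-\phi)K^2=-I\).

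The product \(AB=KR(\phi)K=R(-\phi)K^2=-R(-\phi)\) is a rotation, so \((AB)^{2\kappa}=R(-2\kappa\phi)\). Its trace is \(\pm2\cos(2\kappa\phi)\), where the sign \(\pm\) is the one from Step 2. Choosing \(\phi=\pi/(4|\kappa|)\) gives trace zero, and the sign ambiguity is irrelevant.

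Equivalently, any \(AB\) of trace \(2\cos\psi\) with \(\psi=\pi/(4|\kappa|)\) works. Such pairs exist because two purely imaginary unit quaternions can have any angle between them. This would complete the proof of Proposition~\ref{prop:dihedral-sieve}.
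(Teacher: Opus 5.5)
Your proposal is correct and follows the paper's proof. Both arguments identify \([D_\infty,D_\infty]=\langle r^2\rangle\), factor through \(SO(3)\) to obtain a homomorphism from \(D_\infty\), lift back up to sign, and then pick an explicit pair of trace-zero elements whose product is a rotation by a tunable angle. The only difference is the witness pair: you use \(K\) and \(R(\phi)K\), whose product is \(-R(-\phi)\), while the paper uses \(J_0\) and an anti-diagonal \(B(\theta)\), whose product is diagonal. That difference is cosmetic.
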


\begin{proof}
The commutator subgroup of the infinite dihedral group \(D_\infty\) is \(\langle r^2\rangle\), an infinite cyclic group. Since \(w\in F_2'\), its image under \(\rho\) lies in \([D_\infty,D_\infty]\), so there exists a unique integer \(\kappa(w)\) such that
\[
\rho(w)=r^{2\kappa(w)}.
\]
This integer depends only on \(w\), not on any chosen matrix pair.

Now let \(A,B\in SU(2)\) satisfy \(A^2=B^2=-I\), and let \(\bar A,\bar B\in SO(3)\) be their images under the double cover \(SU(2)\to SO(3)\). Then \(\bar A^2=\bar B^2=1\), so the assignment \(s\mapsto \bar A\), \(t\mapsto \bar B\) defines a homomorphism
\[
\psi_{A,B}:D_\infty\longrightarrow \langle \bar A,\bar B\rangle\le SO(3).
\]
Applying \(\psi_{A,B}\) to the identity \(\rho(w)=r^{2\kappa(w)}\) gives
\[
w(\bar A,\bar B)=\psi_{A,B}(\rho(w))=(\bar A\bar B)^{2\kappa(w)}.
\]
Lifting from \(SO(3)\) back to \(SU(2)\) yields
\[
w(A,B)=\pm(AB)^{2\kappa(w)}.
\]

For the final statement, consider the family
\[
J_0=\begin{pmatrix}0&1\\-1&0\end{pmatrix},
\qquad
B(\theta)=\begin{pmatrix}0&e^{i\theta}\\-e^{-i\theta}&0\end{pmatrix}
\qquad (0\le \theta\le \pi).
\]
Then \(J_0^2=B(\theta)^2=-I\) and
\[
J_0B(\theta)=-\begin{pmatrix}e^{-i\theta}&0\\0&e^{i\theta}\end{pmatrix}.
\]
Therefore
\[
(J_0B(\theta))^{2\kappa(w)}=\begin{pmatrix}e^{-2i\kappa(w)\theta}&0\\0&e^{2i\kappa(w)\theta}\end{pmatrix},
\]
and so
\[
\tr\bigl(w(J_0,B(\theta))\bigr)=\pm\,\tr\bigl((J_0B(\theta))^{2\kappa(w)}\bigr)=\pm 2\cos\bigl(2\kappa(w)\theta\bigr).
\]
If \(\kappa(w)\neq 0\), choosing \(\theta=\pi/(4|\kappa(w)|)\) gives \(\tr(w(J_0,B(\theta)))=0\). Hence \(w\) has a trace-zero witness.
\end{proof}

A complementary viewpoint is to work instead with trace coordinates. The classical description of the rank-two \(SU(2)\) character variety turns the search for a trace-zero witness into a sign problem for a single polynomial on a compact region. This viewpoint yields additional direct criteria that sit naturally beside the slice-based tests.

For \(A,B\in SU(2)\), set
\[
x:=\tr(A),\qquad y:=\tr(B),\qquad z:=\tr(AB).
\]
The classical Fricke--Vogt trace identities imply that for every word \(w\in F_2\) there exists a unique polynomial
\[
P_w(x,y,z)\in \Z[x,y,z]
\]
such that
\[
P_w\bigl(\tr(A),\tr(B),\tr(AB)\bigr)=\tr\bigl(w(A,B)\bigr)
\]
for all \(A,B\in SU(2)\); see Goldman~\cite{Goldman09}.

By the classical Fricke--Vogt trace theorem for two-generator \(SL_2\)-representations, the map
\[
SU(2)^2\longrightarrow \R^3,
\qquad
(A,B)\longmapsto \bigl(\tr(A),\tr(B),\tr(AB)\bigr)
\]
has image
\[
D:=\Bigl\{(x,y,z)\in[-2,2]^3:\ 0\le x^2+y^2+z^2-xyz\le 4\Bigr\},
\]
and this region is compact and connected; see Goldman~\cite[\S2]{Goldman09}. Consequently, if \(w\in F_2'\) and its trace polynomial \(P_w\) takes a nonpositive value at some point of \(D\), then connectedness and the intermediate value theorem yield a point of \(D\) where \(P_w=0\), hence a trace-zero witness in \(SU(2)^2\).

Evaluating the trace polynomial at a few explicit interior points yields an additional immediate criterion. This criterion complements the slice-based constructions and will also enter the finite-menu summary in Section~\ref{sec:obstructions}.

\begin{proposition}\label{prop:interior-point-criteria}
Let \(w\in F_2'\), and let \(P_w(x,y,z)\) be its trace polynomial. If at least one of the two interior evaluations
\[
P_w(0,1,-1),\qquad P_w(0,1,0)
\]
is nonpositive, then \(w\) admits a trace-zero witness.
\end{proposition}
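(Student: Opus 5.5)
The plan is to reduce the statement to the intermediate-value argument recorded just before it. The proof has three steps.

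First, check that the two evaluation points lie in \(D\). At \((0,1,-1)\) one has \(x^2+y^2+z^2-xyz=0+1+1-0=2\). At \((0,1,0)\) one has \(0+1+0-0=1\). Both values lie in \([0,4]\), and all coordinates lie in \([-2,2]\), so both points belong to \(D\); in fact they lie in its interior, which is where the name comes from. By the Fricke--Vogt surjectivity statement quoted above, each point is realized as \((\tr A,\tr B,\tr AB)\) for some pair \((A,B)\in SU(2)^2\). One can also write such pairs down explicitly, for instance \(A=K=\mathrm{diag}(i,-i)\) together with a suitable \(B\) of trace \(1\), if a constructive witness is wanted.

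Second, find a point of \(D\) where \(P_w\) is positive. Since \(w\in F_2'\), the identity pair \((I,I)\) gives \(w(I,I)=I\). This pair corresponds to the point \((2,2,2)\in D\), where \(P_w(2,2,2)=2>0\).

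Third, conclude. Suppose one of the two evaluations is nonpositive. If it equals zero, that pair is already a witness. Otherwise \(P_w\) is negative at one point of \(D\) and positive at another. The region \(D\) is connected and \(P_w\) is continuous, so \(P_w\) vanishes at some point of \(D\). By surjectivity of the trace map, this zero lifts to \((A,B)\in SU(2)^2\) with \(\tr w(A,B)=0\).

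Alternatively, one can avoid any appeal to the connectedness of \(D\) by working on \(SU(2)^2\) itself, which is path connected. Choose preimages of the two points and a path between them. The function \(\tr w(A,B)\) is continuous along the path and is real-valued on \(SU(2)\). Applying the intermediate value theorem along this path gives the zero directly.

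The only real content is quoting the Fricke--Vogt image description correctly. I expect no genuine obstacle beyond the membership check in the first step, which is elementary arithmetic.
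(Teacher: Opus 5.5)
Your proposal is correct and follows the paper's proof: check that \((0,1,-1)\) and \((0,1,0)\) lie in \(D\), then apply the Fricke--Vogt continuity principle (connectedness of \(D\) plus the intermediate value theorem) and lift the resulting zero to \(SU(2)^2\). You usefully make explicit the positive reference value \(P_w(2,2,2)=2\) from the commuting pair \((I,I)\), which the paper's principle uses only implicitly, and your variant with a path in \(SU(2)^2\) is an equally valid way to supply the connectedness.
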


\begin{proof}
A direct check shows that both points lie in the interior of \(D\):
\[
0^2+1^2+(-1)^2-0=2,\qquad 0^2+1^2+0^2-0=1,
\]
and both values lie strictly between \(0\) and \(4\). The claim therefore follows from the Fricke--Vogt continuity principle stated at the beginning of this subsection.
\end{proof}

Having established the individual witness constructions, Section~\ref{sec:obstructions} synthesizes them into a certified menu, isolates the residual class left untreated by it, and explains why any universal completion cannot arise from a fixed finite-image method.

\section{Synthesis: the residual class and the limits of finite-image methods}\label{sec:obstructions}

The explicit constructions in Section~\ref{sec:witnesses} successfully separate a broad range of hard word pairs. Yet a universal answer requires understanding the boundaries of these mechanisms. This section therefore shifts the perspective from geometric construction to structural obstruction. We prove that no fixed finite-image strategy can be universal, consolidate the results proved earlier into a certified menu, and isolate the super-degenerate residual class that sharply defines the present frontier of the separation conjecture.

\subsection{Algorithmic limits of fixed finite-image tests}\label{subsec:finite-menu-slices}

We begin with the negative result that sets the scope of the constructive framework developed earlier. The point is not that explicit geometric families are ineffective, but that no \emph{fixed} finite-image test can be universal. More precisely, the theorem below shows that no predetermined finite collection of finite-image tests can distinguish all hard positive pairs. This explains why the slice-based viewpoint remains essential: the families may be explicit and finite in number, but the successful specialization is allowed to depend on the word.

\begin{theorem}\label{thm:pigeonhole-finite-criterion}
Let $\rho_i:F_2\to G_i$ ($i=1,\dots,m$) be group homomorphisms such that each image $\mathrm{Im}(\rho_i)$ is finite.
Let $\Phi:\{a,b\}^\ast\to \prod_{i=1}^m \mathrm{Im}(\rho_i)$ be the map obtained by restricting each $\rho_i$ to the positive monoid
$\{a,b\}^\ast\subset F_2$ and taking the product,
\[
\Phi(x)=(\rho_1(x),\dots,\rho_m(x)).
\]
Then there exist distinct positive words $u\neq v$ with $\mathrm{ab}(u)=\mathrm{ab}(v)$ and $\Phi(u)=\Phi(v)$.
Equivalently, $w:=u^{-1}v\in F_2'\setminus\{1\}$ lies in $\bigcap_{i=1}^m \ker(\rho_i)$.
\end{theorem}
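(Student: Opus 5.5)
The plan is a pigeonhole argument over a fixed abelianization class. Set $G:=\prod_{i=1}^m \mathrm{Im}(\rho_i)$, a finite set of cardinality $N:=\prod_i|\mathrm{Im}(\rho_i)|$. I will restrict $\Phi$ to positive words with a prescribed abelianization $(k,k)$, or more simply $(k,1)$. These are exactly the words $a^{i}ba^{k-i}$ for $0\le i\le k$, so there are $k+1$ of them. Taking $k=N$ gives $N+1$ pairwise distinct positive words with equal abelianization, mapped by $\Phi$ into a set of size $N$. By pigeonhole, two of them, $u=a^iba^{N-i}$ and $v=a^jba^{N-j}$ with $i\neq j$, satisfy $\Phi(u)=\Phi(v)$. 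Distinctness as elements of $\{a,b\}^\ast$ is clear. Distinctness in $F_2$ then follows because the embedding of the free monoid into $F_2$ is injective, since positive words are reduced.

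For the equivalent formulation, I set $w:=u^{-1}v$. Then $w\in F_2'$ by the computation $\ab(u^{-1}v)=0$ in Section~\ref{sec:preliminaries}. Moreover $w\neq 1$ because $u\neq v$ in $F_2$, and Lemma~\ref{lem:metabelian-nondeg} even gives $w\notin F_2''$. Since each $\rho_i$ is a homomorphism on $F_2$ extending its restriction to the monoid, $\rho_i(w)=\rho_i(u)^{-1}\rho_i(v)=1$ for every $i$. Hence $w\in\bigcap_i\ker\rho_i$.

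Conversely, if $w=u^{-1}v$ lies in every kernel, then $\rho_i(u)=\rho_i(v)$ for all $i$, so the two formulations match.

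There is no real obstacle here. The only point needing care is making sure the pigeonhole is applied within a single abelianization class, so that $\ab(u)=\ab(v)$ comes for free; the family $a^iba^{N-i}$ achieves this directly.

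Alternatively, one can note that $\bigcap_i\ker\rho_i$ is a finite-index normal subgroup $K$. Then $[a^{e},b]\in K\cap F_2'$ for $e$ the exponent of $F_2/K$. With $u=ba^e$ and $v=a^eb$, this gives the explicit pair directly, since $u^{-1}v=a^{-e}b^{-1}a^eb=[a^e,b]$ and $a^e\in K$. I would mention this as a remark giving an explicit bound.
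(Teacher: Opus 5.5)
Your argument is correct and is the same pigeonhole-within-one-abelianization-class proof as the paper's. The paper pigeonholes over the $\binom{n}{\lfloor n/2\rfloor}$ words of length $n$ with $\lfloor n/2\rfloor$ occurrences of $a$, while your class $\ab(u)=\ab(v)=(N,1)$ with its $N+1$ words $a^iba^{N-i}$ is a leaner instance of the same idea. Your closing remark is also valid: in fact $\Phi(ba^e)=\Phi(b)=\Phi(a^eb)$ as soon as $\rho_i(a)^e=1$ for all $i$, so normality of the kernel is not even needed, and it shows the colliding pair can always be taken in the one-row regime $|\Jset|=1$.
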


In particular, no fixed finite family of finite-image tests can distinguish all positive hard pairs.
\begin{proof}
Let $M:=\prod_{i=1}^m|\mathrm{Im}(\rho_i)|<\infty$.
For fixed integers $n\ge 1$ and $0\le k\le n$, the number of positive words of length $n$ with exactly $k$ occurrences of $a$ is $\binom{n}{k}$.
Choose \(n\) large enough that \(\binom{n}{\lfloor n/2\rfloor}>M\), and set \(k=\lfloor n/2\rfloor\).
Among those words, the map \(\Phi\) takes at most \(M\) values, so two distinct words \(u\neq v\) must satisfy \(\Phi(u)=\Phi(v)\).

These two words have the same number of \(a\)'s and the same number of \(b\)'s, hence \(\mathrm{ab}(u)=\mathrm{ab}(v)\) and therefore \(w:=u^{-1}v\in F_2'\).
Also, for every \(i\),
\[
\rho_i(w)=\rho_i(u)^{-1}\rho_i(v)=1,
\]
so \(w\in\bigcap_{i=1}^m\ker(\rho_i)\).
Finally, distinct positive words are distinct reduced words in the free group, so \(w\neq 1\).
\end{proof}

Theorem~\ref{thm:pigeonhole-finite-criterion} identifies a basic structural obstruction: no fixed catalog of finite-image tests can yield a universal separation criterion. This limitation is exactly what makes the slice-driven approach necessary. Rather than searching through predetermined rigid quotients, we work with a finite menu of explicit continuous families, with the successful geometric specialization allowed to depend on the word itself. In this way, the combinatorial invariants extracted earlier guide the choice of slice, and the constructive results of the previous sections bypass the intrinsic limitation of fixed finite-subgroup methods.

\subsection{The current certified slice menu and its range}\label{subsec:current-menu}

We now record the present witness library in its formal synthesized form. The constructions proved earlier produce a small collection of explicit slice families and algebraic points on those families. Each item below is a constructive criterion: it extracts a first-order invariant from the positive-word-difference presentation and uses that invariant to pick a slice, or a distinguished point on a slice, where trace zero is forced.

\begin{theorem}\label{thm:certified-menu-items}
Let \(u\neq v\in\{a,b\}^{\ast}\) satisfy \(\ab(u)=\ab(v)\), and set \(w:=u^{-1}v\in F_2'\). Each of the following conditions implies that \(u\) and \(v\) are exactly separable in the standard two-state measure-once quantum finite automaton model, equivalently that there exist \(A,B\in SU(2)\) with \(\tr(w(A,B))=0\):
\begin{enumerate}[label=\emph{(\alph*)}, leftmargin=2.2em]
\item \textbf{Dihedral / signed \(a\)-count criterion.} Equivalently, the alternating row-prefix invariant \(\Delta_0(w)\) is nonzero, or the signed \(a\)-count satisfies \(\mu(u)\neq\mu(v)\).

\item \textbf{Quaternionic point on the slope-visible slice.} For some slope \(r\in\Z\), the specialized value \(P_r(-1)=(T-1)M_w(T,T^r)|_{T=-1}\) is nonzero.

\item \textbf{Local one-row regime.} The active row set satisfies \(|\Jset|=1\).
\end{enumerate}
\end{theorem}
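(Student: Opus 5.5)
The theorem collects results already proved in Section~\ref{sec:witnesses}. The proof will therefore be a dispatch: show that each of the three conditions gives a trace-zero witness \((A,B)\in SU(2)^2\), and then invoke the reduction of~\cite{BMY16} recalled in the Remark after the definition of trace-zero witnesses. That reduction says such a witness yields exact separation of \(u\) and \(v\) by a two-state MO-QFA. The stated equivalence with exact separability is taken from that reduction, so it suffices to produce a witness in each case.

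For (a), we first use Corollary~\ref{cor:mu-delta0-equivalence}, which gives \(\mu(v)-\mu(u)=2\Delta_0(w)\). Hence the two formulations of the criterion are the same condition, \(\Delta_0(w)\neq 0\). Under that condition, Proposition~\ref{prop:dihedral-point-trace} with any \(r\), for instance \(r=0\) (the mixed normalizer slice \((A(\theta),J_0)\)), and \(\theta=\pi/(4|\Delta_0(w)|)\) gives \(\tr(w(A(\theta),J_0))=T^{\Delta_0}+T^{-\Delta_0}=0\).

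For (b), we fix the slope \(r\) with \(P_r(-1)\neq 0\) and apply Proposition~\ref{prop:q-i-collapse} at \(q=i\) with \(t=\pi/(2|P_r(-1)|)\). This gives the trace \(2\cos(\pi/2)=0\). Alternatively, we can cite Proposition~\ref{prop:aut-quaternionic} with \(\phi=\mathrm{id}\).

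For (c), Proposition~\ref{prop:J1-separated} shows that \(w\) is conjugate to \([a^d,b]^{\pm1}\). It also exhibits the witness \((A(\pi/(4d)),J_0)\); the witness persists because trace is invariant under conjugation and under inversion in \(SU(2)\).

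The main point needing care is the final step: passing from a trace-zero witness to exact separation in the specific two-state MO-QFA model. This relies entirely on~\cite{BMY16}. In the proof I would state it explicitly: with unitary transition matrices \(A,B\), initial state \(e_1\) and accepting projector chosen so that \(u(A,B)e_1\) and \(v(A,B)e_1\) are orthogonal, the acceptance probabilities are \(1\) and \(0\). Orthogonality follows because \(\tr(u^{-1}v)=0\) makes \(u(A,B)^{-1}v(A,B)\) a traceless element of \(SU(2)\), hence conjugate to \(\mathrm{diag}(i,-i)\). After a change of basis it maps some unit vector \(x\) to an orthogonal one, and we take the initial state to be \(x\), using the freedom in choosing the initial state. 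Everything else is direct citation, so the proof should be short.
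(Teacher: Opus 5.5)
Your proposal is correct and follows the paper's route. The paper also proves the theorem by dispatching (a), (b), (c) to Propositions~\ref{prop:dihedral-point-trace} (with Corollary~\ref{cor:mu-delta0-equivalence} for the $\mu$-form), \ref{prop:q-i-collapse}, and~\ref{prop:J1-separated}, and leaves the passage to exact separation to~\cite{BMY16}. Your explicit unpacking of that last step is a correct addition: a traceless element of $SU(2)$ sends some unit vector to an orthogonal one. The automaton's reading-order convention causes no trouble, since $(A^{T},B^{T})$ is a trace-zero witness for the reversed difference whenever $(A,B)$ is one for $w$.
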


The synthesis in Theorem~\ref{thm:certified-menu-items} highlights that exact separation is driven by forcing mechanisms rather than disjoint word categories. While the three criteria are extracted from the ordered positive-word-difference presentation \(w=u^{-1}v\), they collectively represent a unified slice-based strategy. Cases~\emph{(a)} and~\emph{(b)} use global symmetries at distinguished algebraic points of the slope-visible family, while case~\emph{(c)} exploits the local commutator structure of the one-row regime. Their proofs, derived from Propositions~\ref{prop:dihedral-point-trace}, \ref{prop:q-i-collapse}, and~\ref{prop:J1-separated}, demonstrate that trace-zero witnesses can be systematically recovered from the combinatorial invariants \(\Delta_0(w)\), \(P_r(-1)\), and \(\Jset\).

Crucially, the effectiveness of this menu does not contradict the algorithmic limits established in Theorem~\ref{thm:pigeonhole-finite-criterion}. Although a successful specialization often lands in a finite dihedral or quaternionic subgroup, the order of the relevant group is not fixed; instead, it is dynamically dictated by the word's own invariants. By allowing the geometric parameters to vary with the word, this slice-driven framework bypasses the structural obstruction that restricts fixed finite-image methods.

Alongside Theorem~\ref{thm:certified-menu-items}, Proposition~\ref{prop:interior-point-criteria} provides a separate interior-point criterion on the trace region \(D\), and Proposition~\ref{prop:dihedral-sieve} provides an additional certified binary-dihedral slice in which both generators satisfy \(A^2=B^2=-I\). A companion numerical experiment applies a fixed empirical first-hit menu to \(50{,}000\) random hard pairs. The corresponding first-hit counts for the mixed branch, the dihedral branch, the \(T=-1\) filter, and the four-point torsion search are \(45{,}486\), \(2{,}965\), \(1{,}233\), and \(316\), with no misses. This supports the practical reach of the slice-based search strategy on the sampled random model, while the certified mathematical coverage is given by the theorems and propositions above. The reproduction materials are publicly available (see the Data and Code Availability statement).

\subsection{The residual class, the stronger finite-subgroup obstruction, and the completion target}\label{subsec:residual}

The previous subsection records what the present slice menu can already certify. The complementary task is to identify the residual class that escapes these tests and to clarify which further strategies are excluded on structural grounds. The next theorem first strengthens the negative side by showing that even a much broader finite-subgroup search cannot be universal.

\begin{theorem}\label{thm:finite-subgroup-law}
Let
\[
w_\star\ :=\ [a^2,b^2]^{120}\in F_2'.
\]
Then:
\begin{enumerate}
\item $w_\star\neq 1$ in the free group $F_2$.
\item For every finite subgroup $H\le SU(2)$ and every $A,B\in H$, one has $w_\star(A,B)=I$ and hence $\operatorname{tr}(w_\star(A,B))=2$.
\end{enumerate}
Consequently, no method based only on finitely many evaluations inside finite subgroups of \(SU(2)\) can be universal.
\end{theorem}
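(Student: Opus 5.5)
Part (1) is free-group bookkeeping, part (2) rests on the classification of finite subgroups of \(SU(2)\), and the closing sentence follows from (2).

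For part (1), I will argue in the free group directly. The commutator \(c:=[a^2,b^2]=a^{-2}b^{-2}a^2b^2\) is cyclically reduced and nontrivial, since its first and last letters \(a^{-1}\) and \(b\) are not mutually inverse. Hence \(c^{120}\) is obtained by concatenating \(120\) copies without cancellation, so it is a nonempty reduced word and \(w_\star\neq 1\). One could instead use that \(F_2\) is torsion-free, but the reduced-word argument is self-contained.

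For part (2), I will use the classical classification of finite subgroups of \(SU(2)\). They are the cyclic groups, the binary dihedral groups \(Q_{4n}\), and the binary tetrahedral, octahedral and icosahedral groups of orders \(24,48,120\).

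The key observation is that in each case \(x^2\) lies in an abelian subgroup whose exponent divides \(120\).
\begin{itemize}[leftmargin=*,itemsep=2pt]
\item \emph{Cyclic \(H\).} Any two elements commute, so \(c(A,B)=I\) already.
\item \emph{Binary dihedral \(H=\langle \zeta, j\rangle\).} Here \(\zeta\) generates a cyclic subgroup \(C\) of index \(2\). Every element \(x\notin C\) satisfies \(x^2=-I\), while elements of \(C\) square into \(C\). So \(A^2,B^2\) both lie in the abelian group \(C\), they commute, and \(c(A,B)=I\).
\item \emph{The three exceptional groups.} I will not use commutation. Instead I observe that \(c(A,B)\in H\), and the exponents of these groups are \(12\), \(24\) and \(60\). (Element orders in the binary icosahedral group are \(1,2,3,4,5,6,10\), giving lcm \(60\); in the binary octahedral group they include \(8\), giving lcm \(24\); in the binary tetrahedral group, lcm \(12\).) All three divide \(120\), so \(c(A,B)^{120}=I\).
\end{itemize}
Hence \(w_\star(A,B)=c(A,B)^{120}=I\) for every finite \(H\le SU(2)\) and every \(A,B\in H\). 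The trace statement follows since \(\tr(I)=2\).

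The main obstacle is justifying the exponents of the exceptional groups. I would either cite the classification with element orders (e.g.\ Conway--Smith~\cite{ConwaySmith03}) or argue via images in \(SO(3)\): the groups \(A_4,S_4,A_5\) have element orders dividing \(6,12,30\) respectively, and the lift to \(SU(2)\) at most doubles the order. The cyclic and dihedral cases need no exponent bound, which is essential because their orders are unbounded.

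The final sentence follows directly. Any method that only evaluates \(w\) on pairs lying in finite subgroups of \(SU(2)\) sees trace \(2\) on \(w_\star\). Yet \(w_\star\) is a nontrivial element of \(F_2'\), so such a method can never certify a trace-zero witness for it.
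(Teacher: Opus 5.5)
Your proof is correct. For part (2) and the closing consequence it is the paper's argument: the same classification, commutation of $A^2,B^2$ inside the cyclic subgroup for cyclic and binary dihedral $H$, and exponents $12,24,60$ for the exceptional groups, obtained by lifting the exponents $6,12,30$ of $A_4,S_4,A_5$.

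The genuine difference is part (1). The paper proves $w_\star\neq 1$ by evaluation. It takes the elementary unipotent matrices $A=\begin{psmallmatrix}1&1\\0&1\end{psmallmatrix}$ and $B=\begin{psmallmatrix}1&0\\1&1\end{psmallmatrix}$ in $SL(2,\Z)$ and computes $[A^2,B^2]=\begin{psmallmatrix}21&8\\-8&-3\end{psmallmatrix}$. Since this has trace $18$, it has infinite order. You instead use the normal form in $F_2$: $a^{-2}b^{-2}a^2b^2$ is cyclically reduced, so its $120$th power is a reduced word of length $960$. Your route is purely combinatorial and needs nothing beyond the normal form theorem. The paper's route costs a matrix computation and the fact that a trace-$18$ element of $SL(2,\Z)$ has infinite order. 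In exchange, it stays within the paper's evaluation viewpoint, certifying nontriviality by an explicit non-identity value of the word map.

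Two small points. First, your summary that ``in each case $x^2$ lies in an abelian subgroup whose exponent divides $120$'' misdescribes your own bullets, so drop it. In the cyclic and dihedral cases the exponent is unbounded and the point is commutation; in the exceptional cases you use the exponent of $H$ itself. Second, the ``consequently'' clause is sharpest when paired with Proposition~\ref{prop:trace-zero-commutator-powers}. That result shows $w_\star$ does have trace-zero witnesses in $SU(2)$, so finite-subgroup evaluations miss an existing witness rather than correctly reporting none.
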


\begin{proof}
We first show that \(w_\star\) is trivial on every finite subgroup of \(SU(2)\). Finite subgroups of \(SU(2)\) are cyclic groups, binary dihedral groups \(Q_{4n}\), and the three exceptional groups \(2A_4\), \(2S_4\), and \(2A_5\), see for example~\cite{Serre77,ConwaySmith03}.

\smallskip
\noindent\emph{Cyclic case.}
If \(H\) is cyclic, then \(H\) is abelian, so \([A^2,B^2]=I\) for all \(A,B\in H\).
Hence \(w_\star(A,B)=I\).

\smallskip
\noindent\emph{Binary dihedral case.}
Write
\[
Q_{4n}=\langle x,y\mid x^{2n}=1,\ y^2=x^n,\ y^{-1}xy=x^{-1}\rangle.
\]
Every element of \(Q_{4n}\) has the form \(x^k\) or \(yx^k\).
Its square is always in the cyclic subgroup \(\langle x\rangle\). Therefore \(A^2,B^2\in\langle x\rangle\), so they commute and \([A^2,B^2]=I\).
Again \(w_\star(A,B)=I\).

\smallskip
\noindent\emph{Exceptional cases.}
Let \(H\in\{2A_4,2S_4,2A_5\}\), and set \(C=[A^2,B^2]\in H\).
The images of these groups in \(SO(3)\) are \(A_4\), \(S_4\), and \(A_5\).
Their exponents are
\[
\exp(A_4)=6,\qquad \exp(S_4)=12,\qquad \exp(A_5)=30.
\]
If \(x\in H\) projects to an element of order \(t\) in the corresponding rotation group, then \(x^t\in\{\pm I\}\), since the kernel of \(H\to SO(3)\) is \(\{\pm I\}\).
Hence \(x^{2t}=I\).
It follows that every element of \(2A_4\) has order dividing \(12\), every element of \(2S_4\) has order dividing \(24\), and every element of \(2A_5\) has order dividing \(60\).
In particular, every element of \(H\) has order dividing
\[
\mathrm{lcm}(12,24,60)=120.
\]
Therefore \(C^{120}=I\), so \(w_\star(A,B)=C^{120}=I\).

This proves part~(2).

\smallskip
To prove part~(1), it is enough to exhibit one evaluation for which \(w_\star\) is not the identity.
Take
\[
A=\begin{pmatrix}1&1\\0&1\end{pmatrix},\qquad
B=\begin{pmatrix}1&0\\1&1\end{pmatrix}\in SL(2,\mathbb Z).
\]
Then
\[
A^2=\begin{pmatrix}1&2\\0&1\end{pmatrix},\qquad
B^2=\begin{pmatrix}1&0\\2&1\end{pmatrix},
\]
and a direct calculation gives
\[
[A^2,B^2]=A^{-2}B^{-2}A^2B^2=
\begin{pmatrix}21&8\\-8&-3\end{pmatrix}.
\]
This matrix is not the identity, and its trace is \(18\), so it has infinite order in \(SL(2,\mathbb Z)\).
Hence \([A^2,B^2]^{120}\neq I\).
Therefore \(w_\star\neq 1\) in the free group.
\end{proof}

The theorem above shows that finite subgroups are not enough for a universal criterion, but it does not imply that \(w_\star\) or similar words fail to admit trace-zero witnesses in \(SU(2)\). In fact the opposite is true: commutator powers admit such witnesses uniformly. The argument is short in \(SU(2)\), so it is convenient to record it directly.

\begin{proposition}\label{prop:trace-zero-commutator-powers}
Fix integers \(m,n,k\ge 1\) and consider the word \(w=[a^m,b^n]^k\in F_2'\).
Then there exist \(A,B\in SU(2)\) such that \(\operatorname{tr}(w(A,B))=0\).
In particular, \(w_\star=[a^2,b^2]^{120}\) admits trace-zero witnesses in \(SU(2)\).
\end{proposition}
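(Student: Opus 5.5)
The plan is to reduce to the commutator \(C:=[A^m,B^n]\) and show that its trace can be made to take any value in a suitable interval. Then \(C\) can be chosen with eigenvalues \(e^{\pm i\pi/(2k)}\), so that \(C^k\) has eigenvalues \(e^{\pm i\pi/2}=\pm i\) and hence trace zero.

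Concretely, I would make \(A^m\) and \(B^n\) into convenient generators. Choose \(A\) diagonal with \(A^m=\mathrm{diag}(e^{i\varphi},e^{-i\varphi})\); any \(\varphi\) is reachable by taking an \(m\)-th root inside the diagonal torus. Choose \(B=R(s/n)\), so that \(B^n=R(s)\) by the one-parameter group property of \(R\). Thus \(w(A,B)=[X,Y]^k\) with \(X=\mathrm{diag}(e^{i\varphi},e^{-i\varphi})\) and \(Y=R(s)\).

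Next I would compute \(\tr[X,Y]\) via the Fricke--Vogt commutator identity
\[
\tr[X,Y]=x^2+y^2+z^2-xyz-2,
\]
where \(x=\tr X\), \(y=\tr Y\), \(z=\tr(XY)\). Alternatively, one can compute directly: \(X^{-1}Y^{-1}XY=(X^{-1}Y^{-1}X)Y\). A simpler route is to take \(\varphi=\pi/2\), so that \(X=\mathrm{diag}(i,-i)=K\). Then \(K^{-1}R(s)^{-1}K=R(s)\), using \(KR(s)K^{-1}=R(-s)\) from the proof of Proposition~\ref{prop:q-i-collapse}. Hence \([K,R(s)]=R(2s)\), and
\[
w(A,B)=R(2ks),\qquad \tr\bigl(w(A,B)\bigr)=2\cos(2ks).
\]
Choosing \(s=\pi/(4k)\) gives trace zero.

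So the concrete witness is
\[
A=\mathrm{diag}\bigl(e^{i\pi/(2m)},e^{-i\pi/(2m)}\bigr),\qquad B=R\bigl(\pi/(4kn)\bigr).
\]
Then \(A^m=K\) and \(B^n=R(\pi/(4k))\), which gives \([A^m,B^n]^k=R(\pi/2)=J_0\), whose trace is \(0\).

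The only step needing care is the sign and ordering convention in \([x,y]=x^{-1}y^{-1}xy\). I expect no real obstacle here; the value is \(R(\pm 2s)\) either way, and \(\cos\) is even. For \(w_\star\), set \(m=n=2\) and \(k=120\).
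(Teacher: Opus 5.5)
Your proof is correct and is essentially the paper's argument, with the roles of the diagonal torus and the rotation subgroup \(\{R(t)\}\) interchanged. The paper takes \(A^m=J_0\) and \(B^n=D(\theta/2)\), so that \([J_0,D(\theta/2)]=D(\theta)\); you take \(A^m=K\) and \(B^n=R(s)\), so that \([K,R(s)]=R(2s)\), and in both versions an order-four element inverting a one-parameter subgroup forces the commutator to be a square in that subgroup, whose \(k\)-th power is then tuned to trace zero.
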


\begin{proof}
Let
\[
D(\phi):=\begin{pmatrix}e^{i\phi}&0\\0&e^{-i\phi}\end{pmatrix},
\qquad
J_0:=\begin{pmatrix}0&1\\-1&0\end{pmatrix},
\qquad
R(t):=\begin{pmatrix}\cos t&\sin t\\-\sin t&\cos t\end{pmatrix}=\exp(tJ_0).
\]
Set \(\theta=\pi/(2k)\) and \(g:=D(\theta)\). Then
\[
\operatorname{tr}(g^k)=2\cos(k\theta)=2\cos(\pi/2)=0.
\]
A direct computation gives
\[
[J_0,D(\theta/2)]=D(\theta)=g.
\]
Now set
\[
A:=R\!\left(\frac{\pi}{2m}\right),
\qquad
B:=D\!\left(\frac{\theta}{2n}\right).
\]
Since \(R(\pi/2)=J_0\), it follows that \(A^m=J_0\). Clearly \(B^n=D(\theta/2)\). Therefore
\[
[A^m,B^n]=[J_0,D(\theta/2)]=g,
\]
and hence
\[
w(A,B)=[A^m,B^n]^k=g^k,
\]
which has trace zero.
\end{proof}

It is also natural to isolate the hard pairs that remain invisible to the present certified menu.

\begin{definition}\label{def:super-degenerate}
A hard positive-word difference \(w=u^{-1}v\in F_2'\) is called \emph{super-degenerate} if it satisfies
\[
\Delta_0(w)=0,
\qquad
P_r(-1)=0\ \text{for both parities of }r,
\]
\[
|\Jset|\neq 1,
\qquad
\kappa(w)=0,
\qquad
P_w(0,1,-1)>0,
\qquad
P_w(0,1,0)>0.
\]
\end{definition}

If \(w\) is super-degenerate, then Proposition~\ref{prop:dihedral-point-trace} gives \(\tr(w)=2\) at the dihedral point of the slope-visible family, Proposition~\ref{prop:q-i-collapse} gives \(\tr(w)=2\) at the quaternionic point for every slope parity, and Corollary~\ref{cor:mu-delta0-equivalence} shows that the mixed normalizer exponent also vanishes. Proposition~\ref{prop:J1-separated} does not apply because \(|\Jset|\neq 1\), Proposition~\ref{prop:dihedral-sieve} does not apply because \(\kappa(w)=0\), and Proposition~\ref{prop:interior-point-criteria} does not apply because both listed interior evaluations are positive. The present certified menu therefore leaves this residual class untouched. Any completion must add a genuinely new forcing mechanism.

Taken together, these results rule out any completion based only on a fixed finite list of finite quotients or finite subgroups. At the same time, they show that the explicit slice menu already covers a very large portion of the problem. What remains logically viable is a finite menu of explicit analytic slice families, each equipped with a connected compact parameter space and a distinguished basepoint at which the two generators commute. This is the completion target toward which the earlier constructions point.

\begin{conjecture}\label{conj:finite-menu-cover}
There exist finitely many explicit families \(\mathscr S_1,\dots,\mathscr S_M\) with the following properties.
For each \(i\), the family is given by a continuous map
\[
(A_i,B_i):X_i\longrightarrow SU(2)^2
\]
from a compact connected parameter space \(X_i\), together with a distinguished point \(x_i^0\in X_i\) such that \(A_i(x_i^0)\) and \(B_i(x_i^0)\) commute.
For every hard word \(w=u^{-1}v\in F_2'\) arising from distinct positive words, there exist \(i\) and \(t\in X_i\) such that
\[
\tr\bigl(w(A_i(t),B_i(t))\bigr)\le 0.
\]
\end{conjecture}

Because \(w\in F_2'\), the commuting basepoint condition implies
\[
\tr\bigl(w(A_i(x_i^0),B_i(x_i^0))\bigr)=2.
\]
Hence continuity on the connected space \(X_i\) shows that any parameter value with nonpositive trace forces a trace-zero witness somewhere on the same family.

The menu items already proved in this paper show that such a program is not empty formalism. The slope-visible family, its algebraic endpoints, the mixed normalizer slice, the binary-dihedral slice from Proposition~\ref{prop:dihedral-sieve}, and the trace-coordinate tests all provide concrete components of the desired library. The remaining task is to construct an additional analytic family that detects the residual class from Definition~\ref{def:super-degenerate}.

\section{Conclusion and outlook}\label{sec:conclusion}

The paper develops the exact-separation problem through a local slice philosophy. Starting from the row-wise metabelian structure of a positive-word difference, it extracts slope specializations that remain nontrivial and places them on explicit low-dimensional slice families in \(SU(2)^2\). The main payoff is that the trace-zero problem becomes visible on low-dimensional slices through directly computable invariants: the alternating row-prefix quantity \(\Delta_0(w)\), equivalently the signed \(a\)-count on the \(r=0\) specialization, the quaternionic specialization \(P_r(-1)\), the one-row regime \(|\Jset|=1\), the binary-dihedral exponent \(\kappa(w)\), and selected interior evaluations of the trace polynomial. This yields a small certified menu that already covers a wide range of hard positive-word differences while keeping the geometry explicit.

Just as importantly, the paper clarifies the remaining obstruction. Fixed finite-image methods are ruled out in principle, yet the unresolved class is sharply delimited by the super-degenerate conditions. The problem therefore condenses to a concrete completion target: construct additional explicit analytic slice families whose trace geometry detects that residual class. In this sense, the present work reduces the two-state exact-separation conjecture from a diffuse search over \(SU(2)^2\) to a focused geometric extension problem centered on the residual class and Conjecture~\ref{conj:finite-menu-cover}.

\nonumsection{Declarations}

\noindent\textbf{Data and Code Availability}\par
The study uses no external empirical dataset. The numerical experiment reported in the paper is based on \(50{,}000\) randomly generated hard positive-word pairs and records the corresponding first-hit counts for the fixed empirical witness menu. The source code, generation protocol, and summary output needed to reproduce these computations are publicly available in the \href{https://github.com/chen8965/exact-separation-of-words-via-trace-geometry-experiment}{project repository}.

\bibliographystyle{unsrt}
\bibliography{ref}

\end{document}